\documentclass[11pt]{article} 
\usepackage{listings}
\usepackage{fullpage}
\date{}
\usepackage{amsfonts,amssymb,amsmath,amsthm}

\parskip        0mm
\oddsidemargin  0in
\evensidemargin 0in
\textwidth      6in
\topmargin      0in
\marginparwidth 30pt
\textheight     9.2in
\headheight     .1in
\headsep        .1mm

\usepackage{verbatim}

\newtheorem{proposition}{Proposition}[section]
\newtheorem{theorem}[proposition]{Theorem}

\newtheorem{corollary}[proposition]{Corollary}


\newcommand{\ident}[1]{\textit{#1}\rule{0cm}{1ex}}
\gdef\dash---{\thinspace---\hskip.16667em\relax}
\gdef\smdash--{\thinspace--\hskip.16667em\relax}
\gdef\op|{\,|\;}
\newcommand{\emn}[1]{{\em #1\/}}


\newcommand{\T}{\mathcal{T}}
\newcommand{\E}{\mathcal{E}}

\newcommand{\D}{\mathcal{D}}

\newcommand{\Naturals}{\mathbb{N}}

\newcommand{\PFD}{\ensuremath{\mathcal{P}}}

\newcommand{\MFD}{\ensuremath{\mathcal{M}}}

\newcommand{\seq}{w}
\newcommand{\conf}{\Sigma}
\newcommand{\setConf}{\Sigma^\star}
\newcommand{\initConf}{\hat{\conf}}
\newcommand{\initSetConf}{\hat{\Sigma}^\star}
\newcommand{\seqprobconf}{W(\initSetConf,\setConf)}
\newcommand{\problem}{P} 
\newcommand{\Alg}{A}
\newcommand{\Trans}[1]{\ident{T}_{#1}}
\newcommand{\TrAlg}{\widetilde{A}}
\newcommand{\redCT}{\succeq^\ident{CT}}
\newcommand{\redJT}{\succeq^\ident{JT}}

\newcommand{\redSolv}{\succeq^{\ident{s}}}

\lstdefinelanguage{delta}
{
    keywords={Variables, Augmentations, Task, const, boolean, integer, array, begin, read, write, send, receive, while, if, then, else, do, true, false, procedure, requires, ensures, function, wait, until, forall, foreach, return, exit, abort, crash, define, type, new, initially, set},
    sensitive=true,
    literate={:=}{{$\gets~$}}{1} {<=}{{$\leq$}}{1} {>=}{{$\geq$}}{1} {/=}{{$\neq$}}{1},
    morecomment=[s]{/*}{*/},
    commentstyle=\itshape
}

\lstdefinestyle{nonumbers}{numbers=none}
\lstdefinestyle{numbers}{numbers=left, numberstyle=\normalsize, stepnumber=1, numbersep=7pt, xleftmargin=10pt}

\bibliographystyle{alpha}

%
%
%

\begin{document}

\title{Solvability-Based Comparison of Failure Detectors}

\author{
\begin{tabular}{c c}
Srikanth Sastry & Josef Widder\\
Google, Inc. & TU Wien\\
\end{tabular}
}


\maketitle 

\begin{abstract}   

Failure detectors are oracles that have been introduced to provide
     processes in asynchronous systems with information about faults.
This information can then be used to solve problems otherwise unsolvable in asynchronous systems.
A natural question is on the ``minimum amount of information'' a
     failure detector has to provide for a given problem.
This question is classically addressed using a relation that states
     that a failure detector~$\D$ is stronger (that is, provides
     ``more, or better, information'') than a failure detector $\D'$ if $\D$ can
     be used to implement~$\D'$.
It has recently been shown that this classic implementability relation
     has some drawbacks.
To overcome this, different relations have been defined, one of which
     states that a failure detector $\D$ is stronger than $\D'$ if
     $\D$ can solve all the time-free problems solvable by~$\D'$.
In this paper we compare the implementability-based hierarchy of failure detectors to the
     hierarchy based on  solvability.
This is done by introducing a new proof technique for establishing the
     solvability relation.
We apply this technique to known failure detectors from the literature
     and demonstrate significant differences between the hierarchies.
 \end{abstract}

\section{Introduction}

Failure detectors \cite{chan:ufdfr} provide an oracular mechanism to
     circumvent  the impossibility of several problems in fault-prone
     asynchronous systems~\cite{FLP,fich:03:hirdc}.
Intuitively, the idea is to enrich asynchronous systems with
     information about failures that may be useful to overcome the
     difficulties posed by process crashes.
Chandra and Toueg~\cite{chan:ufdfr} and Chandra, Hadzilacos, and
     Toueg~\cite{chan:twfdf1} demonstrated landmark results relating to failure detectors: 
the results in \cite{chan:ufdfr} demonstrated the use of failure
     detectors to solve consensus and other related problems, while
     the results in \cite{chan:twfdf1} showed that any failure
     detector that can be used to solve consensus can also be used to
     implement a failure detector called~$\Omega$.
Since~$\Omega$ is also sufficient to solve consensus, it is the
     weakest failure detector to solve consensus.
To arrive at these important results, \cite{chan:ufdfr} and
     \cite{chan:twfdf1} introduced a relation to compare the ``power''
     of failure detectors: 
denoted by $\D\redCT \D'$, a failure detector $\D$ to said to be
     stronger than $\D'$, if $\D$ can be used to \emph{implement}
     $\D'$.
     
Since \cite{chan:twfdf1}, the relation $\redCT$ has been used to prove
     similar results for several other problems and has motivated the
     view that failure detectors could be used as ``computability
     benchmark'' \cite{FGK11}; that is, an answer to the question on
     the weakest failure detector  to solve a problem $\problem$ is
     said  to provide the minimal synchrony assumptions necessary to
     solve $\problem$ in fault-prone systems \cite{chan:twfdf1,FGK11}.
This viewpoint is based on several implicit assumptions, one of which
     is that the  hierarchy of failure detectors induced by the
     relation~$\redCT$ is similar to hierarchies induced by other
     natural relations, in other words, that it is robust.

In the work presented here, we focus on this assumption and explore
     the nature of relations that compare failure detectors.
Incidentally, the robustness of the $\redCT$ relation has been
     challenged in recent work
     \cite{jayanti:ephawfd,charron-bost:10:isolt,cornejoetalAFD,cornejoetalAFD-TR}, where it was
     observed that the relation has several drawbacks; for instance,
     $\redCT$ is not reflexive.
To overcome the drawbacks of the~$\redCT$ relation, new relations have
     been proposed in \cite{jayanti:ephawfd} and
     \cite{charron-bost:10:isolt}.

Jayanti and Toueg introduced a new relation in \cite{jayanti:ephawfd},
     which we denote $\redJT$, with a different notion of what it
     means to \emph{implement} a failure detector.
The new relation $\redJT$ extends $\redCT$ and avoids several
     drawbacks of the $\redCT$ relation\footnote{We provide detailed
     descriptions of the $\redCT$ and $\redJT$ relations in Section
     \ref{sec:comparisonRelations}.}.
Based on the $\redJT$ relation, Jayanti and Toueg then demonstrate
     that every problem has a weakest failure detector.
The results in \cite{jayanti:ephawfd} actually holds true for a
     specific class of problems, and in fact, later work by Bhatt and
     Jayanti~\cite{bhatt:oteow} shows that there exist a different
     class of problems that do not have a weakest failure detector.
The apparent contradiction\footnote{There is no real
     contradiction here.
The reconciliation between \cite{jayanti:ephawfd} and
     \cite{bhatt:oteow} is explained in \cite{bhatt:oteow}.} between
     \cite{jayanti:ephawfd} and \cite{bhatt:oteow} regarding the
     existence of weakest failure detectors demonstrates the
     significant dependence of weakest failure detector results on the
     definition of a ``problem''  and choice of the failure detector
     comparison relation.

In \cite{charron-bost:10:isolt}, Charron-Bost et al.\  advocate a
     new comparison relation denoted by~$\redSolv$.
By definition, $\D \redSolv \D'$ if  every (time-free)
     problem solvable by $\D'$ is also solvable by $\D$.
In contrast to the $\redCT$ and $\redJT$ relations, which are based on
     implementing one failure detector using another, the $\redSolv$
     relation depends on the set of problems solvable by each failure
     detector.
If $\D \redCT \D'$, or $\D \redJT \D'$, then any problem
     solvable with~$\D'$ can be solved using $\D$.
Consequently, it is straightforward that~$\redSolv$ extends $\redCT$
     and~$\redJT$.
However, given two failure detectors $\D$ and $\D'$,
     \cite{charron-bost:10:isolt} provides no mechanism for
     demonstrating $\D \redSolv\D'$ without having to establish $\D
     \redCT \D'$ or $\D \redJT \D'$.
In effect, it is not clear how the $\redSolv$ relation differs from
     the $\redCT$ and $\redJT$ relations.

\medskip

\paragraph{Summary of results.} 

In this paper, we address the aforementioned issues by providing a new
     proof technique to establish the $\redSolv$ relation.
Our approach is based on algorithm transformations, and to our
     knowledge, we are first to do so in the context of failure
     detector comparison.
Although, from a technical viewpoint, the proofs are similar to
     existing proofs that establish $\redCT$ and $\redJT$ relations,
     the relationships resulting from our proof technique differ
     significantly from existing relationships among some failure
     detectors.
     
In order to illustrate the difference between $\redCT$ and $\redSolv$,
     we consider three families of failure detectors: the perfect
     failure detector $\PFD$ \cite{chan:ufdfr}, the Marabout failure
     detector $\MFD$ \cite{guer:01:hfap}, and the~$\PFD_{k}$ sequence
     \cite{bhatt:oteow}.\footnote{We describe these failure detectors
     in detail and give their definitions in
     Section~\ref{sec:examples}.} 

The results in \cite{guer:01:hfap} established that $\MFD \not\redCT
     \PFD$ and $\PFD \not\redCT \MFD$.
In contrast, we show that $\MFD$ may be used to solve all the problems
     solvable using $\PFD$, and furthermore, there are problems that
     are solvable using $\MFD$ but not solvable using~$\PFD$.
In other words, we show that $\MFD \redSolv \PFD$ and $\PFD
     \not\redSolv\MFD$.

The results in \cite{bhatt:oteow} show that $\PFD^k \redCT \PFD^{k+1}$
     and $\PFD^{k+1} \not\redCT \PFD^k$.
In contrast, we show that $\PFD^k$ and $\PFD^{k+1}$ can be used to
     solve the same set of problems, that is, for any $k$, the failure
     detectors  $\PFD^k$ and $\PFD^{k+1}$ are equivalent with respect
     to the $\redSolv$ relation.

The results in \cite{guer:01:hfap} and \cite{bhatt:oteow} employ  the
     $\redCT$ relation\footnote{Using arguments similar to the ones
     presented in \cite{guer:01:hfap} and \cite{bhatt:oteow}, one can
     easily show that $\MFD$ and~$\PFD$ are incomparable and $\PFD^k$
     is strictly stronger than $\PFD^{k+1}$  with respect to the
     $\redJT$ relation as well.} to prove that certain failure detectors
     cannot be the weakest ones to solve the given problem.
In contrast, our results show that these conclusions drawn in
     \cite{guer:01:hfap} and \cite{bhatt:oteow} do not hold if the
     failure detectors are compared using the $\redSolv$ relation.
Thus, different natural relations to compare failure detectors lead to
     significantly different results.

\section{The failure detector model}\label{sec:FDmodel}

We recall the basic definitions of the failure detector model
     \cite{chan:ufdfr}.
Informally, it consists of a set of crash-prone processes that are
     connected via reliable asynchronous links and have access to a failure-detector
     oracle that provides information.
In this paper, we only consider failure detectors where this
     information has the form of a subset of the processes in
     the~system.

More formally, the system consists of a finite set of \emph{processes}
     $\Pi$.
We assume that each process $p_i$ in $\Pi$ has a link $l_{(i,j)}$ to
     every process $p_j$ in $\Pi$ over which \emph{messages} can be
     sent.
There is a discrete global time base~$\T$, and for simplicity we
     assume its range of values is the natural numbers $\Naturals$.

\paragraph{Failures and failure patterns.}  

A \emph{failure pattern} is a function $F\colon\T\rightarrow2^\Pi$.
This means that if $p_i\in F(t)$ then $p_i$ has failed by time $t$.
We consider crash faults only, and so $F(t)\subseteq F(t+1)$, for all
     times~$t$.
We say that $p_i$ is live at time $t$ if $p_i\not\in F(t)$, and define
     the set of live processes at time $t$ as
     $\ident{live}(F,t)=\Pi\setminus F(t)$.
A process $p_i$ is correct in $F$ if $p_i$ is always live, that is,
     $p_i \in \ident{correct}(F)=\bigcap_{t\in\T}\ident{live}(F,t)$.
We say processes that are not correct are faulty\dash---or
     crashed\dash---and we abbreviate $\ident{faulty}(F) =
     \Pi\setminus \ident{correct}(F)$.
An \emn{environment} $\E$ is defined as a non-empty set of failure
     patterns.
In this paper, we consider the environment that consists of all
     failure patterns for $\Pi$.

\paragraph{Failure detectors.} 

A failure detector history $H$ is a function $H\colon\Pi \times \T
     \rightarrow 2^\Pi$.\footnote{The failure detectors considered in
     this paper always output a set of processes.
So we do not need the more general original
     definition~\cite{chan:ufdfr} here.} If $\mathcal{H}$ denotes the
     set of all possible histories, then a \emph{failure detector} is
     a function $\D\colon \E \rightarrow 2^\mathcal{H} \setminus
     \emptyset$.

\paragraph{States and configurations.} 

Each process is modeled as a (possibly infinite) state machine $A_i$
     over the set of states $Q_i$ for each process $p_i \in \Pi$.
An algorithm $A$ is a collection of all such state machines
     $(A_i)_{p_i \in \Pi}$.
There exists a non-empty set of states $\hat{Q}_i \subseteq Q_i$
     that are the \emph{initial states} of $p_i$.

Each communication link $l_{(i,j)}$ is also represented by a set of
     states, and the state of each link $l_{(i,j)}$, denoted
     $s_{(i,j)}$ is the set of messages in transit from $p_i$ to
     $p_j$.
The state of a link with no messages in transit is said to be the
     \emph{initial state} of the link.

The \emph{configuration} of a system is a vector $C =
     (s_0,\ldots,s_{n-1},s_{(0,0)},s_{(0,1)}, \ldots, s_{(n-1,n-1)})$
     where $s_i$ is the state of $p_i$ and $s_{(i,j)}$ is the state of
     the link $l_{(i,j)}$.
Then a configuration in which all the processes and links are in
     initial states is called an \emph{initial configuration}.
The set of all configurations of a system is denoted $\mathcal{C}$ and
     the set of all initial configurations is denoted
     $\mathcal{I}$.
The notation $C|_i$ denotes the state of $p_i$ in configuration $C$,
     and $C|_\Pi$ denotes the vector of states of the processes in
     $\Pi$ in configuration $C$.
Similarly, the notation $C|_{(i,j)}$ denotes the state of the link
     $l_{(i,j)}$ in configuration $C$.

\paragraph{Steps.} 

Each transition of the state machine $A_i$\dash---or \emph{step} of
     the process $p_i$\dash---takes as input the current state $s$ of
     the process, zero or one message $m_r$ (the ``received''
     message), and an output~$d$ from the failure detector; it
     produces as output a new state $s'$ for the process and may send
     a message $m_s$ to another process $p_k$ via the corresponding
     communication link (the ``sent'' message).
Incidentally, the receipt of a message by a process $p_i$ from $p_j$
     removes the message from the link $l_{(j,i)}$ and the sending of
     a message by $p_i$ to $p_k$ adds the message to the link
     $l_{(i,k)}$; 
this step can then be identified by the tuple $(p_i, s, m, d,
     s', m')$, where $m$ is~$\bot$ if no message is received
     and  $(p_j, m_r)$ otherwise, and similarly, $m'$ is  $\bot$ if no
     message is sent and  $(p_k, m_s)$ otherwise.

\paragraph{Schedules.} 

A schedule $\Phi$ of an algorithm $A$ is a sequence of steps taken by
     processes executing $A$; the $\ell$th step of $\Phi$ is denoted
     $\Phi[\ell]$.
A \emph{projection} of a schedule $\Phi$ over a process~$p_i$ is the
     subsequence of $\Phi$ consisting of only the steps executed by
     $p_i$ and is denoted~$\Phi|_{i}$.

\paragraph{Time-Sequences.} 

A time-sequence $T$ is a sequence of increasing values in $\T$; the
     $\ell$th element in $T$ is denoted $T[\ell]$ (which represents
     the time at which the step $\Phi[\ell]$ occurs).
Again, we define a \emph{projection} of a time-sequence $T$ over a
     process $p_i$ as a subsequence of $T$ consisting of only the
     times at which $p_i$ executes steps and is denoted~$T|_i$.

\paragraph{Runs.}

A run $R$ of an algorithm $A$ using a failure detector $\D$ is a tuple
     $\langle F, H, I, \Phi, T \rangle$, where $F$ is a failure
     pattern, $H\in\D(F)$ is a failure detector history, $I \in
     \mathcal{I}$ is an initial configuration of~$A$, $\Phi$ is a
     schedule of $A$, and $T$ is a time-sequence.
Run $R$ is \emph{valid for $A$}\dash---or just \emph{valid} for
     short\dash---if correct processes take an infinite number of
     steps and if
 for each $\ell \ge 1$, the step $\Phi[\ell]
     \equiv (p_i,s,m,d,s',m')$ satisfies the following properties.      
\begin{itemize}

\item The process $p_i$ is live at time $T[\ell]$; that is, $p_i
     \notin F(T[\ell])$.

\item $d$ is an output of the failure detector $\D$ at time $T[\ell]$;
     formally, $d = H(p_i,T[\ell])$.

\item There are no spurious messages, that is, if $m$ is of the form
     $(p_j,m_r)$, then there exists some $k<\ell$ such that $m_r$ is a
     message that was sent by $p_j$ to $p_i$ in step $\Phi[k]$ identified by
     $(p_j,*,*,*,*,(p_i,m_r))$.

\item Message transmission is reliable, that is, if $m'$ is of the
     form $(p_j,m_s)$, then there is at most one $k>\ell$ such that
     step $\Phi[k]$ is of the form $(p_j,*,(p_i,m_s),*,*,*)$.
Furthermore, if $p_j$ is correct, then there is exactly one such step.

\item If $\Phi[\ell]$ is the first step of process in $p_i$ in run $R$, then
     $s=I|_i$.

\item The state of a process does not change between consecutive steps by that process; that
     is, if $p_i$ takes another step, then the first step of $p_i$
     after $\Phi[\ell]$  is of the form $(p_i,s',*,*,*,*)$.
\end{itemize}

\paragraph{Configuration sequences induced by runs.}
     Given a run $R = \langle F, H, I, \Phi, T \rangle$, the
     configuration of the system after $k$ steps are taken is given by
     $\gamma(I,\Phi,k)$.
The sequence $\gamma(I,\Phi,0), \gamma(I,\Phi,1), \dots$ is the
     \emph{configuration sequence of run}~$R$.
The state of process~$p_i$ after~$p_i$ takes $k$ steps in the run is
     given by $\gamma_i(I,\Phi,k)$; if process $p_i$ crashes and takes
     only~$k$ steps, then we use the convention that
     $\gamma_i(I,\Phi,\ell) = \gamma_i(I,\Phi,k)$ for $\ell\ge k$.

Note that if two runs share the same $I$ and $\Phi$ (but differ, for
     instance, at the times steps are taken), then they induce the
     same configuration sequence.

\section{Solving problems}\label{sec:problemDefinition}

We now define the notion of a \emph{problem} and what it means to
     \emph{solve} a problem.
Problems traditionally depend on initial values (as in consensus
     \cite{FLP}) and transitions to certain states depending on the
     initial values.
So we have to define a problem by referring to problem states.
Problems also depend on the correctness of processes.
For instance, faulty processes are not required to make progress.
In the failure-detector model, faults are modeled by failure patterns,
     which define after what time faulty processes must not take
     steps.
However, before that, processes need not take steps.
As we want to get rid of all time dependencies in the problem
     definition, it is hence natural to restrict problems by the set
     of processes that appear in the failure pattern rather than
     restricting the problems by the times at which  processes appear
     in the failure pattern.
This is done in the \emph{crash time independence} property described
     later.

Moreover, as we define problems to be solvable in asynchronous
     systems, we have to consider the nature of runs in such systems.
Since message delays and process speeds are unconstrained in
     asynchronous systems, processes may take finitely many idempotent
     or no-op steps while waiting for a message, or while waiting on
     some local predicate to become true.
To reflect this, we require that problems are tolerant to \emph{finite
     stuttering} which is described after the following preliminary
     definitions.

We start by defining $\sigma$ as a set of \emph{problem states}.
By $\hat{\sigma}$ we denote the set of \emph{initial problem states},
     with $\hat{\sigma}\subseteq\sigma$.
A \emph{problem configuration} $\conf$ for a system of size $n$ is an
     $n$-dimensional vector of problem states.
We denote by $\conf|_i$, the problem state associated with process
     $p_i$ in the problem configuration $\conf$.
A problem configuration consisting only of initial problem states is
     called an \emph{initial problem configuration} $\initConf$.
We denote $\setConf$ to be the set of all possible problem
     configurations, and we denote $\initSetConf$ to be the set of all
     possible initial problem configurations; note that $\initSetConf
     \subseteq \setConf$.
We denote $\seqprobconf$ to be the set of all sequences of problem
     configurations that start with an initial problem configuration.

Further, let $\seq_{\ident{pre}}$ be a finite problem configuration
     sequence starting with an initial problem configuration,  let
     $\seq_{\ident{suff}}$ be a problem configuration sequence, and
     let $\conf$ and $\conf'$ be two problem configurations.
Let $\conf_{mid}$ be any problem configuration such that for each
     process $p_i$, either $\conf_{mid}|_i = \conf|_i$ or
     $\conf_{mid}|_i = \conf'|_i$.
Then, for any problem configuration sequence $\seq =
     \seq_{\ident{pre}} \cdot \conf \cdot \conf' \cdot
     \seq_{\ident{suff}}$, the sequence $\seq' = \seq_{\ident{pre}}
     \cdot \conf \cdot \conf_{mid} \cdot \conf' \cdot
     \seq_{\ident{suff}}$ is a \emph{$1$-stutter} of $\seq$ denoted by
     $\seq \sqsubset_1 \seq'$.
Inductively for each $n>1$, we define $\seq'$ to be an
     \emph{$n$-stutter} of~$\seq$, denoted by $\seq \sqsubset_{n}
     \seq'$, if there is a sequence $v$ such that $\seq
     \sqsubset_{n-1} v \; \wedge \; v \sqsubset_1 \seq'$.
Further, we define~$\seq'$ to be a \emph{stutter} of $\seq$, denoted by
     $\seq \sqsubseteq \seq'$, if either $\seq=\seq'$ or there is an
     $n$, $0<n<\infty$, such that $\seq \sqsubset_{n} \seq'$.

\paragraph{Problems.}
Briefly, a problem is a predicate over a problem configuration
     sequence that starts with an initial problem configuration, and a
     fault pattern.
More precisely, a \emph{time-free problem} $\problem$ over
     $\seqprobconf$  
in fault
     environment~$\E$\dash---or just \emph{problem} for short\dash---
     is a predicate $\problem$ on $\seqprobconf \times \E$ with
     the following properties: 
\begin{itemize}

\item \emph{Crash time independence.}   For all failure patterns $F$ and $F'$ in
     $\E$ and for all $\seq$ in~$\seqprobconf$,  $correct(F) =
     correct(F')$ implies $\problem(\seq,F) = \problem(\seq,F')$.

\item \emph{Finite stuttering.} For any failure pattern $F$, and any
     two problem configuration sequences $\seq$ and $\seq'$ in
     $\seqprobconf$, $\seq \sqsubseteq \seq'$ implies
     $\problem(\seq,F)= \problem(\seq',F)$.
\end{itemize}

\paragraph{Solving a problem}

Let $\Alg$ be an algorithm, and let a problem $\problem$ be defined
     for $\seqprobconf$ and $\E$.
Let an \emph{interpretation} $V_i$ be a function that maps the states
     $Q_i$ of $\Alg$ to $\sigma$ (the problem states that constitute
     $\seqprobconf$), such that the initial states of the
     algorithm~$\hat{Q}_i$ are mapped onto $\hat{\sigma}$
     (surjective).
This naturally extends to a function $V_\Pi$ that maps configurations
     $C|_\Pi$ to problem configurations.
An \emph{interpreted run} is a sequence of problem configurations
     obtained by applying  $V_\Pi$ to the configuration sequence of a
     valid run $R=\langle F, H, I, \Phi, T \rangle$ of~$\Alg$; it is
     denoted by $\ident{ir}(R,V_\Pi)$.
Further, the set of all interpreted runs of algorithm $\Alg$
     using~$\D$ with failure pattern $F$ interpreted by $V_\Pi$ is
     denoted by $\ident{IR}(\Alg,F,\D,V_\Pi)$.

Algorithm $\Alg$ solves a problem $\problem$ using failure detector
     $\D$ in environment $\E$, if there is a function $V_\Pi$  such
     that for all $F$ in $\E$ and any $w \in
     \ident{IR}(\Alg,F,\D,V_\Pi)$,  the predicate $P(w,F)$ holds.
If there is an algorithm that solves problem $\problem$ using failure
     detector $\D$ we say that failure detector~$\D$ \emph{can be used
     to solve} $\problem$, or in other words $\problem$ \emph{is
     solvable using} $\D$.

\medskip

The definition of a problem encompasses many common problems in distributed computing, including classic agreement problems.
The set of problem states of consensus, for instance, can be defined
     as $\sigma = \left\{(p,d)\colon p\in\{0,1\} \wedge
     d\in\{\bot,0,1\}\right\}$. A problem state $(p,d)$ at process $p_i$ signifies a state where a process $p_i$ has
     $p$ as its proposed initial value, and $d$ is its decision; if $p_i$ has not yet decided, then 
     $d=\bot$, and otherwise $d$ is $p_i$'s final decision.
The set of initial problem configurations $\initSetConf$ is the set of
     all $n$-element vectors where each $i$-th element is a problem state of $p_i$ and is of the form $(p,\bot) \in \sigma$.
One can then naturally define the consensus properties agreement,
     termination, and validity as predicates on problem configuration
     sequences, and consensus as the conjunction of these predicates.

\section{Comparison relations}\label{sec:comparisonRelations}

\paragraph{Chandra-Toueg relation.}  

We recall from~\cite{chan:ufdfr,chan:twfdf1} that $\D\redCT\D'$ is
     defined via failure detector transformation as follows.
An  algorithm $\Trans{\D\rightarrow\D'}$ uses $\D$ to maintain a
     variable $\ident{out}_i$ at every process~$p_i$.
This variable emulates the output of $\D'$ at $p_i$.
Let $O_R$ be the history of all the $\ident{out}_i$ variables in run
     $R$, that is, $O_R(p_i,t)$ is the value of $\ident{out}_i$ at
     time~$t$ in run~$R$.
Algorithm $\Trans{\D\rightarrow\D'}$ \emph{transforms}~$\D$
     \emph{into} $\D'$ if for every valid run $R=\langle F, H , I,
     \Phi, T \rangle$ of $\Trans{\D\rightarrow\D'}$ using $\D$,
     $O_R\in \D'(F)$.
If such an algorithm $A$ exists, then $\D\redCT\D'$.

\paragraph{Jayanti-Toueg relation.}  

The relation $\redJT$, introduced in \cite{jayanti:ephawfd}, differs
     from $\redCT$ in that the notion of what it means to transform a
     failure detector is different from the one used
     in~\cite{chan:ufdfr}; partly by changing the computational model.
Instead of using the failure detector value at the time the step
     occurs, the ``query mechanism'' is modeled via a query to the
     failure detector at time~$t$ and a response from the failure
     detector at some time $t'>t$.
Specifically, an algorithm $\Trans{\D\rightarrow\D'}$ uses $\D$ and
     transforms $\D$ to $\D'$ if and only if, for every valid run of
     $\Trans{\D\rightarrow\D'}$, there exists a history $H$ of $\D'$
     under the failure pattern of the run such that the following is
     true.
For each process~$p_i$, and for each query by $p_i$ to
     $\Trans{\D\rightarrow\D'}$ which happens at some time $t$,
     $\Trans{\D\rightarrow\D'}$ responds with an output $out$ at some
     time $t'\geq t$, and $out \in \{H(p_i,s)\colon s \in [t,t']\}$.
Hence, the definition of transformation does not require  maintaining
     a variable $\ident{out}_i$ but rather requires ensuring
     consistency of the query and response events.

\paragraph{Solvability relation.} 

The relation $\redSolv$, introduced in \cite{charron-bost:10:isolt},
     states that a failure detector $\D$ is stronger than $\D^\prime$
     with respect to the solvability relation, denoted
     $\D\redSolv\D^\prime$, if~$\D$ {can be used to solve} any problem
     solvable using $\D^\prime$.

\bigskip


The definitions of $\redCT$ and $\redJT$ provide a straightforward
     proof technique to demonstrate the claims $\D \redCT \D'$ and $\D
     \redJT \D'$.
In order to prove $\D \redCT \D'$ or  $\D \redJT \D'$ one has to
     provide an algorithm $\Trans{\D\rightarrow\D'}$ that has the
     properties described above.

If $\D\redCT\D'$ then every problem solvable with $\D'$ is solvable
     with $\D$ \cite{chan:ufdfr,chan:twfdf1} and thus  $\redSolv$
     extends $\redCT$.
Similarly, one sees that $\redSolv$ extends $\redJT$ as well.
However, if  $\D\not\redCT\D'$, no proof technique has been given so
     far to establish $\D\redSolv\D'$.

\section{New technique for proving the solvability relation} 

Our approach is based on the following idea.
If a problem $\problem$ is solvable using $\D'$, then there exists an
     algorithm $\Alg$ that uses $\D'$ and solves $\problem$.
If we can transform $\Alg$ to another algorithm $\TrAlg$ such that
     $\TrAlg$ uses $\D$ and solves $\problem$, then we have shown that
     problem $\problem$ is also solvable using $\D$.
Furthermore, if we demonstrate the aforementioned result for every
     problem solvable using $\D'$, then we have shown that
     $\D\redSolv\D'$.

More generally, the proof technique focuses on defining a
     transformation function~$\mathfrak{F}$ whose domain is the set of
     all algorithms that use $\D'$ and whose range is the set of
     algorithms that use $\D$ such that if algorithm $\Alg$  uses
     $\D'$ to solve $\problem$, then $\mathfrak{F}(\Alg)$ uses $\D$
     and solves $\problem$.

In order to prove that the function~$\mathfrak{F}$ actually has this
     desired property, we consider an arbitrary problem $\problem$
     solvable using $\D'$.
We do so by considering an algorithm $\Alg$ that solves $\problem$
     using $\D'$.
By definition, such an algorithm must exist.
Moreover, there is a function $V_\Pi$ which maps configurations of
     each valid run $R$ of $\Alg$ using $\D'$ to a sequence of problem
     configurations that satisfy $\problem$.
Using $V_\Pi$, we define a new function $\widetilde{V}_\Pi$ that maps
     the configurations of $\mathfrak{F}(\Alg)$ to problem
     configurations.
We then have to show that for any interpreted run $w \in
     \ident{IR}(\mathfrak{F}(\Alg),F,\D,\tilde{V}_\Pi)$,  the
     predicate $P(w,F)$ holds.

\section{Failure detectors under consideration}\label{sec:examples}

\subsection{Definitions}

In this section we define the three kinds of failure detectors that we
     are going to use in this paper.
The \emph{perfect failure detector} $\PFD$ was originally proposed in
     \cite{chan:ufdfr}.
Informally, $\PFD$ eventually and permanently suspects crashed
     processes and never suspects live processes.
More precisely, $\PFD$ is defined to ensure \emph{strong completeness}:   
$$
\forall F\in\E,\;\forall H\in\PFD(F),\; 
  \forall p_j\in\ident{faulty}(F),\;
\forall p_i \in \ident{correct}(F), \;
\exists t' \in \T, \;
\forall t> t'\colon \;
p_j\in H(p_i,t),
$$
and \emph{strong accuracy}:
$$
\forall F\in\E,\;\forall H\in\PFD(F),\;\forall t \in \T,\; 
\forall p_i,p_j \in \ident{live}(F,t)
\colon\;p_j\not\in H(p_i,t).
$$

The \emph{Marabout} failure detector $\MFD$ was introduced in
     \cite{guer:01:hfap}\footnote{Although the definition printed in
     \cite{guer:01:hfap} is slightly different (only failure detector
     outputs of correct processes instead of live processes are
     restricted), we claim that actually the definition given here is
     used in the proof sketches in \cite{guer:01:hfap}.
Otherwise, for instance, the proof sketch of
     \cite[Proposition~3.3]{guer:01:hfap} would fail; one could easily
     construct a case where a process that is going to crash in the
     future decides differently from a correct process.},
and it always outputs the set of faulty processes.
It is defined as:  
$$
 \forall F\in\E,\;
\forall H\in\MFD(F),\;
\forall t \in \T,\; 
\forall p_i \in \ident{live}(F,t): \;
 H(p_i,t) = \ident{faulty}(F).
$$

The $\PFD_k$ failure detector was introduced in \cite{bhatt:oteow}
     (using the notation ``$\D_k$'' which we find somewhat
     inconsistent with the rest of our notations).
Informally, $\PFD_k$ can provide arbitrary information about processes
     that crash before or at time $k$.
For correct processes and processes that crash after time $k$,
     $\PFD_k$ never suspects these processes before they crash, and
     $\PFD_k$ eventually and permanently suspects these processes
     after they crash.
Formally, $\PFD_k$ satisfies the properties $k$-Completeness: 
\begin{multline}
\forall F \in \E,\; 
\forall H \in \PFD_k(F),\;
\forall p_i,p_j \in \Pi,\;
\exists t' \in \T, \;
\forall t>t' \; 
 \colon \\
(p_j \in \ident{live}(F,k) \wedge p_j \in 
\ident{faulty}(F ) \wedge p_i \in correct
 (F) ) \Rightarrow p_j \in H(p_i, t), \nonumber
\end{multline}
and $k$-Accuracy:
$$\forall F \in \E,\; 
\forall H \in \PFD_k(F),\;
\forall p_i,p_j \in \Pi,\;
\forall t \in \T\ \colon
 (p_j\in \ident{live}(F,k)
\wedge p_j \notin F (t)) \Rightarrow p_j \notin H(p_i, t).$$

\subsection{Comparing $\MFD$ and $\PFD$.}

In \cite{guer:01:hfap} it was shown that $\PFD$ and $\MFD$ are not
     comparable with respect to $\redCT$.
Informally, the arguments for the result are as follows.
No algorithm can tell by message exchange or from looking at the
     output of~$\PFD$ at a certain time which processes will
     eventually crash (in the future), therefore $\PFD\not\redCT\MFD$.
For showing $\MFD\not\redCT\PFD$, note that faulty processes should
     not be put into the set of suspected processes too early by
     $\PFD$, as this would violate strong accuracy.
However, by strong completeness of $\PFD$, crashed processes have to
     be added to the set eventually.
The outputs of $\MFD$ do not allow us to reconcile these two
     requirements.
Hence, no algorithm that queries $\MFD$ can implement $\PFD$; in other
     words, $\MFD\not\redCT\PFD$.
Similar arguments also apply to the $\redJT$ relation, and it can be
     shown that $\MFD$ and $\PFD$ are incomparable with respect to the
     $\redJT$ relation as well.

In this paper, we show for the solvability relation, that
     $\PFD\not\redSolv\MFD$ and $\MFD\redSolv\PFD$.
Demonstrating $\PFD\not\redSolv\MFD$ is straightforward.
It is sufficient to give a problem solvable using $\MFD$ and not
     solvable using $\PFD$.
Consider the following variant of consensus, called \emn{strong
     consensus}, which requires that all the correct processes have to
     output the input value of some unique \emph{correct} process in
     the system, if there is a correct process, and otherwise output
     anything.

Solving this problem using $\MFD$ is straightforward.
Each process sends its input to all the processes and waits for inputs
     from  the set of processes not suspected by $\MFD$.
Since the processes not suspected by $\MFD$ are the correct processes,
     if each process decides on the input of the correct process with
     the smallest ID, the problem is solved.
However, as $\PFD$ does not provide information on process crashes in
     the future, we can show that there is no algorithm that solves
     strong consensus using $\PFD$.
So we conclude that $\PFD\not\redSolv\MFD$.

In order to establish that $\MFD$ is strictly stronger than $\PFD$, it
     remains to show that $\MFD\redSolv\PFD$.
We shall do so in Section~\ref{sec:MP} in which we introduce a general
     transformation \emph{Stall-on-Suspect} that transforms any
     algorithm $A$ using $\PFD$ into an algorithm $\tilde{A}$ using
     $\MFD$.
Intuitively, Stall-on-Suspect ensures that faulty processes do not
     participate in the algorithm.
Given an algorithm $\Alg$, each process first queries $\MFD$ to
     determine whether it is correct or faulty.
If a process $p_i$ queries $\MFD$ and discovers that it is faulty,
     then $p_i$ stops participating in the algorithm by performing
     only no-op steps and sends no messages until it crashes.
Otherwise, process~$p_i$ follows the original algorithm $\Alg$
     faithfully.
We show in Section~\ref{sec:MP} that each valid run of the modified
     algorithm using $\MFD$ is indistinguishable from some valid run
     of the original algorithm using $\PFD$ where faulty processes
     crash initially, at time $0$.
Since, by assumption, the original algorithm solves the  problem using
     $\PFD$, the same problem is solvable by $\MFD$ as well.
Thus, we show that every problem solvable by $\PFD$ is also solvable
     by $\MFD$.

\subsection{Comparing $\PFD_k$ failure detectors}

In \cite{bhatt:oteow}, the series of $\PFD_k$ failure detectors were
     proposed to solve FCFS mutual exclusion.
Note that various values of $k$ instantiate different failure
     detectors, and it was shown in \cite{bhatt:oteow} for all $k\ge
     0$ that $\PFD_{k} \redCT \PFD_{k+1}$ and $\PFD_{k+1}
     \not\redCT \PFD_{k}$.
The proof of the former is based on the observation that the trivial
     transformation (namely, at each step, write the current failure detector
     output into $\ident{out}_i$) is sufficient to implement
     $\PFD_{k+1}$ using $\PFD_k$; intuitively, correctness follows because the histories of
     $\PFD_k$ are a strict subset of the histories of
     $\PFD_{k+1}$.\footnote{This argument is in general not sufficient
     to prove $\redCT$ as shown in~\cite{charron-bost:10:isolt}.
It works in this case, as $\PFD_k$ belongs to the class of failure
     detectors called ``time-free'' in~\cite{charron-bost:10:isolt};
     they allow finite stuttering.}

The latter ($\PFD_{k+1} \not\redCT \PFD_{k}$) is established by
     showing that no algorithm that queries $\PFD_{k+1}$ can reliably
     detect if some process has crashed at time $k+1$, which is a
     necessary requirement to implement $\PFD_k$.
Similar arguments show  for all $k\ge 0$ that $(\PFD_{k} \redJT
     \PFD_{k+1})$ and $(\PFD_{k+1} \not\redJT \PFD_{k})$

In this paper, we show for all $k\ge 0$ that $(\PFD_{k} \redSolv
     \PFD_{k+1})\, \wedge\, (\PFD_{k+1} \redSolv \PFD_{k})$.
Demonstrating $\PFD_{k} \redSolv \PFD_{k+1}$ is straightforward and it
     follows from the result $\PFD_{k} \redCT \PFD_{k+1}$ from
     \cite{bhatt:oteow} and the observation that $\redSolv$ extends
     $\redCT$ \cite{chan:ufdfr}.

Therefore, it remains to be shown that $\PFD_{k+1} \redSolv \PFD_{k}$.
We do so in Section~\ref{sec:DK} using a general transformation
     \emph{Delay-a-Step} which just adds a no-op step at the beginning
     of each execution for each algorithm.
Given an algorithm $\Alg$ that solves some problem $\problem$ using
     failure detector $\PFD_{k}$, in the delay-a-step transformation,
     each process $p_i$ first executes a no-op step in which $p_i$
     neither receives nor sends any message; thereafter, $p_i$
     executes the algorithm $\Alg$ but queries $\PFD_{k+1}$ instead of
     $\PFD_k$.
We show in Section \ref{sec:DK} that each valid run of the modified
     algorithm using $\PFD_{k+1}$ induces an interpreted run that is
     also an interpreted run (with ``shifted'' failure pattern) of the
     original algorithm using $\PFD_k$.
Since, by assumption, the original algorithm solves $\problem$ using
     $\PFD_k$, problem $\problem$ is solvable by $\PFD_{k+1}$ as well.
Thus, we show that every problem solvable by $\PFD_k$ is also solvable
     by $\PFD_{k+1}$.

\section{Every problem solvable using $\PFD$ is solvable using $\MFD$}
\label{sec:MP}

\subsection{Algorithmic transformation: Stall-on-Suspect}
\label{subsec:algTrans}

Informally, the \emph{Stall-on-Suspect} transformation (SoS) converts
     an algorithm $\Alg$ to an algorithm $\TrAlg$ such that $\TrAlg$
     at a process $p_i$ behaves exactly like $\Alg$ if the failure
     detector at $p_i$ does not suspect itself initially.
Otherwise, $\TrAlg$ goes into a special stall state in which it
     remains for the remainder of the execution.
 
More precisely, the SoS transformation is defined by a function
     $\mathfrak{F}_{SoS}(\Alg)$ that maps an algorithm $\Alg =
     (\Alg_i)_{\forall p_i \in \Pi}$ that uses a failure detector that
     outputs a list of suspected processes to a new algorithm $\TrAlg
     = (\TrAlg_i)_{\forall p_i \in \Pi}$.
The new algorithm $\TrAlg$ is constructed as follows.
First, for each process $p_i$, we add a new set of states
     $S^\dagger_i$ to the states of $\Alg_i$, such that $|S^\dagger_i|
     = |\hat{Q}_i|$.
The states in $S^\dagger_i$ are not initial states in $\TrAlg_i$.
We define a bijective function $\ident{stall}_i\colon \hat{Q}_i
     \rightarrow S^\dagger_i$ that maps the initial states of process
     $p_i$ to states in $S^\dagger_i$.

The state transitions in $\TrAlg_i$ differ only in the transitions
from initial states:
If a process~$p_i$ of~$\TrAlg_i$ is in state $q\in\hat{Q}_i$, and if
     the failure detector output of a step of $p_i$ contains $p_i$,
     then~$p_i$ sends no message and goes into state
     $\ident{stall}_i(q)$.
Otherwise, $p_i$'s step is the one specified by~$\Alg_i$.
If a process $p_i$ of $\TrAlg_i$ is in $s\in S^\dagger_i$, then $p_i$
     sends no message and remains in state $s$ in each step.

\subsection{Solving $\problem$ using $\mathfrak{F}_{SoS}(\Alg)$}

Consider the algorithm $\TrAlg = \mathfrak{F}_{SoS}(\Alg)$.
Let $\widetilde{R}=\langle F, H, I, \widetilde{\Phi}, \widetilde{T}
     \rangle$ be an arbitrary valid run of~$\TrAlg$ using $\MFD$.
Let $\Phi$ and $T$ be the schedule and time sequence obtained by
     removing the entries corresponding to steps of processes in
     $\ident{faulty}(F)$ from $\widetilde{\Phi}$ and $\widetilde{T}$,
     respectively.

\begin{proposition}\label{prop:RvalidMFDrun}
If $\widetilde{R}=\langle F, H, I, \widetilde{\Phi}, \widetilde{T}
     \rangle$ is a valid run of $\TrAlg$ using failure detector
     $\MFD$, then $R=\langle F, H, I, \Phi, T \rangle$ is a valid run
     of $\Alg$ using $\MFD$ where no faulty process takes a step.
\end{proposition}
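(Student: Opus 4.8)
The plan is to verify that $R=\langle F, H, I, \Phi, T\rangle$ satisfies every clause in the definition of a valid run of $\Alg$ using $\MFD$, and to do so I would first extract two structural facts about $\widetilde{R}$ that follow at once from the defining property of $\MFD$ (at every time, every live process outputs exactly $\faulty(F)$).

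\emph{Fact (a): faulty processes send no messages in $\widetilde{R}$.} Suppose $p_j\in\faulty(F)$ takes a step in $\widetilde{R}$. By the validity clauses for $\widetilde{R}$, its first step is taken from state $I|_j$, which is an initial state of $\TrAlg_j$, hence an element of $\hat{Q}_j$ (the two algorithms share the initial-state set $\hat{Q}_j$; the added states $S^\dagger_j$ are explicitly not initial). At the time of that step $p_j$ is live (validity), so the failure-detector output it reads is $\faulty(F)\ni p_j$; by the Stall-on-Suspect transition, $p_j$ sends no message and moves to $\ident{stall}_j(I|_j)\in S^\dagger_j$. Every later step of $p_j$ is taken from a state in $S^\dagger_j$ and, again by the transformation, sends no message. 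Hence no step of $p_j$ sends anything. \emph{Fact (b): correct processes run $\Alg$.} If $p_i\in\corr(F)$, then at each of its steps the failure-detector output is $\faulty(F)\not\ni p_i$, so $p_i$ never takes the stall transition, never enters $S^\dagger_i$, and each of its steps is exactly the step prescribed by $\Alg_i$.

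With these in hand, the remaining clauses for $R$ are routine. The pattern $F$ and the history $H\in\MFD(F)$ are inherited unchanged; $I\in\mathcal{I}$ also as an initial configuration of $\Alg$, since $\Alg$ and $\TrAlg$ share initial process states and the empty link states; $T$ remains a strictly increasing time-sequence because it is a subsequence of $\widetilde{T}$, and its alignment with $\Phi$ is preserved because we delete the same positions from $\widetilde{\Phi}$ and $\widetilde{T}$. Since only steps of processes in $\faulty(F)$ are deleted, $\Phi$ contains steps of correct processes only --- so ``no faulty process takes a step'' holds and the liveness clause is immediate (each process occurring in $\Phi$ is live at all times) --- and by Fact (b) each such step is a legal step of the corresponding $\Alg_i$. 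Correct processes still take infinitely many steps, and for a correct $p_i$ the consecutive $p_i$-steps of $\Phi$ are exactly the consecutive $p_i$-steps of $\widetilde{\Phi}$, so the ``first step from $I|_i$'' and ``state unchanged between consecutive steps'' clauses transfer verbatim. The failure-detector clause transfers because deleting steps changes neither the identity of a surviving step nor the time at which it occurs.

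The clause that genuinely uses Fact (a), and the one I expect to be the crux, is \emph{no spurious messages}: if a surviving step $\Phi[\ell]$ of $p_i$ receives $(p_j,m_r)$, we must produce an earlier index of $\Phi$ at which $p_j$ sent $m_r$ to $p_i$. Such a send exists in $\widetilde{R}$; by Fact (a), $p_j$ cannot be faulty, hence $p_j$ is correct, its sending step is not deleted, and it survives in $\Phi$ at an index below $\ell$. Reliable transmission is handled by the same observation: a message sent by a surviving (correct) step to a correct recipient is received in $\widetilde{R}$ by exactly one later step of that correct, non-deleted process, hence by exactly one later step in $R$; a message sent to a faulty process is received by no step of $R$, which still meets the ``at most one'' requirement for faulty recipients. (Here I take the natural reading that a stall step receives no message; even without it, the only messages affected by deletion are those destined for faulty processes, for which ``at most one receiver'' is vacuous.) What remains is pure bookkeeping --- checking that deleting a fixed set of positions from $\widetilde{\Phi}$ and $\widetilde{T}$ preserves all ordering and alignment invariants --- and this is the only thing left once Facts (a) and (b) are established.
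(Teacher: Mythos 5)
Your proof is correct and follows essentially the same route as the paper's: both rest on the two observations that faulty processes stall at their first step (since $\MFD$ makes them suspect themselves) and hence send no messages, while correct processes never suspect themselves and so take exactly the steps of $\Alg$, after which the validity clauses for $R$ transfer from $\widetilde{R}$. Your version is simply more explicit about the clause-by-clause bookkeeping (in particular the no-spurious-messages and reliable-transmission clauses), which the paper leaves as ``the consistency of $R$ follows from the consistency of $\widetilde{R}$.''
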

\begin{proof}
To show this proposition, one has to check that the consistency
     requirements of a valid run from Section \ref{sec:FDmodel} are
     met in~$R$.
Since the output of $\MFD$ at a faulty process always suspects itself,
     in the first step of a faulty process in $\TrAlg$, the process
     transitions to a state in~$S^\dagger$ and never sends a message.
Therefore, faulty processes do not send messages  in
     run~$\widetilde{R}$ of~$\TrAlg$.
Since correct processes never suspect themselves, they take the same
     steps in $R$ and~$\widetilde{R}$ by construction.
Consequently, $R$ does not contain any steps in which a message from a
     faulty process is received.
Apart from this, the consistency of $R$ follows from the consistency
     of~$\widetilde{R}$.
\end{proof}

Given a failure pattern $F$, let $F^0$ be the \emn{initial crash
     scenario}, that is, the failure pattern where  $F^0(0) =
     \ident{faulty}(F)$ and for any $t>0$, $F^0(t)= F^0(0)$.

\begin{proposition}\label{prop:R0validMFDrun}
If $R=\langle F, H, I, \Phi, T \rangle$ is a valid run of $\Alg$ using
     $\MFD$ where no faulty process takes a step, then $R^0=\langle
     F^0, H, I, \Phi, T \rangle$ is a valid run of $\Alg$ using
     $\MFD$.
\end{proposition}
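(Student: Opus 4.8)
The plan is to verify directly that $R^0$ satisfies every clause in the definition of a valid run from Section~\ref{sec:FDmodel}, exploiting that the only datum in $R^0$ that differs from $R$ is the failure pattern: $H$, $I$, $\Phi$, and $T$ are carried over verbatim. The intuition is that in $R$ the faulty processes never take a step anyway, so moving their crash times down to time~$0$ cannot invalidate anything. First I would record the two facts that drive the argument: since $F^0(t) = \ident{faulty}(F)$ for every $t$, we have $\ident{live}(F^0,t) = \Pi \setminus \ident{faulty}(F)$ for all $t$, hence $\ident{correct}(F^0) = \ident{correct}(F)$ and $\ident{faulty}(F^0) = \ident{faulty}(F)$. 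Moreover, correct processes are never crashed, so $\ident{correct}(F) \subseteq \ident{live}(F,t)$ for every $t$, and therefore $\ident{live}(F^0,t) \subseteq \ident{live}(F,t)$ for every $t$.

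Next I would check that $H \in \MFD(F^0)$, so that $\langle F^0,H,I,\Phi,T\rangle$ is even well-typed as a run of $\Alg$ using $\MFD$. Fix $t \in \T$ and $p_i \in \ident{live}(F^0,t)$. By the inclusion above, $p_i \in \ident{live}(F,t)$, so $H(p_i,t) = \ident{faulty}(F)$ because $H \in \MFD(F)$; and $\ident{faulty}(F) = \ident{faulty}(F^0)$, which is exactly the Marabout condition applied to $F^0$. Hence $H \in \MFD(F^0)$.

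It then remains to verify validity. The hypothesis that no faulty process takes a step in $R$ means every process appearing in $\Phi$ lies in $\ident{correct}(F) = \ident{correct}(F^0)$; since correct processes take infinitely many steps in $R$ (part of $R$ being valid), the same holds in $R^0$. For each step $\Phi[\ell]\equiv(p_i,s,m,d,s',m')$: (i)~$p_i$ is correct, so $p_i \notin \ident{faulty}(F) = F^0(T[\ell])$, giving liveness at time $T[\ell]$; (ii)~$d = H(p_i,T[\ell])$ is inherited from $R$ and we have shown $H \in \MFD(F^0)$; (iii)~the "no spurious messages" clause, the "at most one receiving step" clause, the first-step-equals-initial-state clause, and the "state unchanged between consecutive steps" clause mention only $\Phi$ and $I$, so they hold in $R^0$ because they held in $R$; (iv)~the reliable-delivery clause additionally refers to "if $p_j$ is correct", but $p_j \in \ident{correct}(F^0) \Leftrightarrow p_j \in \ident{correct}(F)$, so this clause too is unchanged from $R$. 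This exhausts the definition, so $R^0$ is a valid run of $\Alg$ using $\MFD$.

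The only mildly delicate point is the verification $H \in \MFD(F^0)$: one has to notice that the Marabout specification constrains $H$ only at \emph{live} processes, that the live set of $F^0$ is contained in that of $F$ at every time, and that $\ident{faulty}$ agrees on $F$ and $F^0$. Everything else is bookkeeping, since $\Phi$, $T$, $I$, $H$ are reused unchanged and only correct processes — a set preserved by passing from $F$ to $F^0$ — ever take steps.
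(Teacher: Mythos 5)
Your proof is correct and follows essentially the same route as the paper's: check that $H$ remains a legal Marabout history for $F^0$ (using $\ident{correct}(F)=\ident{correct}(F^0)$) and observe that all remaining validity clauses carry over because only correct processes take steps and $I$, $\Phi$, $T$, $H$ are unchanged. Your verification that $H\in\MFD(F^0)$, via $\ident{live}(F^0,t)\subseteq\ident{live}(F,t)$ and the fact that Marabout only constrains live processes, is spelled out more carefully than the paper's one-line assertion, but it is the same argument.
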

\begin{proof}
We prove this proposition by showing that $R^0$ satisfies the
     consistency conditions of a valid run as specified in Section
     \ref{sec:FDmodel}.
Note that in $R^0$ all faulty processes crash at time~$0$; therefore,
     no faulty process takes a steps in $R^0$.
Since $correct(F)=correct(F^0)$, the history~$H$ is a valid history of
     $\MFD$ for fault pattern $F^0$.
Since $R$ and $R^0$ share the same schedule $\Phi$ and $R$ is a valid
     run of $\Alg$ using $\MFD$, remaining consistency conditions for
     $R^0$ follows from the consistency of $R$.
\end{proof}

From the definition of $\MFD$ and $\PFD$ one observes that in initial
     crash scenarios, the history of $\MFD$ is in the set of allowed
     histories of $\PFD$, and therefore we find: 

\begin{proposition}\label{prop:R0validPFDrun}
If $R^0=\langle F^0, H, I, \Phi, T \rangle$ is a valid run of $\Alg$
     using $\MFD$, then $R^0$ is a valid run of $\Alg$ using $\PFD$.
\end{proposition}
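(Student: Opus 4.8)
The plan is to reduce the proposition to a single observation about histories, namely that an $\MFD$-history for an initial crash scenario is automatically a $\PFD$-history for that scenario, and then to note that everything else about validity is insensitive to which failure detector the history came from. Recall that "$R^0 = \langle F^0, H, I, \Phi, T\rangle$ is a valid run of $\Alg$ using $\D$" unpacks into two requirements: that $H \in \D(F^0)$, and that the consistency conditions of Section~\ref{sec:FDmodel} hold together with the requirement that correct processes take infinitely many steps. Those consistency conditions refer to the failure detector only through the clause $d = H(p_i, T[\ell])$, which constrains the history $H$ itself and not the detector it was drawn from. Since $R^0$ is assumed valid for $\Alg$ using $\MFD$ and its tuple $\langle F^0, H, I, \Phi, T\rangle$ is unchanged, all of these conditions transfer verbatim to the claim that $R^0$ is valid using $\PFD$; the only thing that needs a new argument is the membership $H \in \PFD(F^0)$.

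To establish $H \in \PFD(F^0)$ I would exploit the rigid structure of the initial crash scenario: $F^0(t) = F^0(0) = \faulty(F^0)$ for every $t \in \T$, so $\ident{live}(F^0, t) = \corr(F^0)$ for every $t$, and from $H \in \MFD(F^0)$ we get $H(p_i, t) = \faulty(F^0)$ for every $t$ and every $p_i \in \ident{live}(F^0, t)$. Strong accuracy is then immediate: if $p_i, p_j \in \ident{live}(F^0, t)$ then $p_j \notin F^0(t) = \faulty(F^0) = H(p_i, t)$. Strong completeness is equally direct: for $p_j \in \faulty(F^0)$ and $p_i \in \corr(F^0)$, the process $p_i$ is live at all times, so $p_j \in \faulty(F^0) = H(p_i, t)$ for every $t$, and $t' = 0$ witnesses the existential quantifier. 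Hence $H$ satisfies both defining axioms of $\PFD$ under $F^0$.

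Putting the two parts together, $R^0$ uses a history legal for $\PFD$ under $F^0$ and meets every consistency requirement of a valid run (inherited from its validity under $\MFD$), so $R^0$ is a valid run of $\Alg$ using $\PFD$. There is no genuine obstacle here; the only point requiring care is to verify that the definition of a valid run is truly failure-detector-agnostic beyond the membership clause, and that it is precisely the constant faulty set of $F^0$ that forces the $\MFD$ output to coincide with a legitimate $\PFD$ output. This is the content of the remark preceding the proposition, and the proof is essentially its unpacking.
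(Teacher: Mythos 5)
Your proof is correct and matches the paper's approach: the paper leaves this proposition as an immediate consequence of the remark that in initial crash scenarios an $\MFD$-history lies in the allowed histories of $\PFD$, and your argument is exactly the careful unpacking of that remark (constant faulty set forces $H(p_i,t)=\faulty(F^0)$ at live processes, yielding strong accuracy and strong completeness with $t'=0$), plus the correct observation that the remaining validity conditions are detector-agnostic.
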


From the three propositions above we infer

\begin{theorem}\label{thm:MPcorr}
For any valid run $\widetilde{R}=\langle F, H, I, \widetilde{\Phi},
     \widetilde{T} \rangle$ of $\TrAlg$ using $\MFD$ there is a valid
     run $R^0=\langle F^0, H, I, \Phi, T \rangle$ of $\Alg$ using
     $\PFD$.
\end{theorem}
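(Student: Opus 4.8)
The plan is to chain the three preceding propositions, carefully tracking the failure pattern and the schedule/time-sequence as they pass through each one. Fix an arbitrary valid run $\widetilde{R}=\langle F, H, I, \widetilde{\Phi}, \widetilde{T}\rangle$ of $\TrAlg$ using $\MFD$, and let $\Phi$ and $T$ be obtained from $\widetilde{\Phi}$ and $\widetilde{T}$ by deleting the entries corresponding to steps of processes in $\ident{faulty}(F)$, exactly as set up immediately before Proposition~\ref{prop:RvalidMFDrun}.

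First I would invoke Proposition~\ref{prop:RvalidMFDrun} to conclude that $R=\langle F, H, I, \Phi, T\rangle$ is a valid run of $\Alg$ (not $\TrAlg$) using $\MFD$, and moreover one in which no faulty process takes a step. Second, I would apply Proposition~\ref{prop:R0validMFDrun} to $R$: since $R$ is a valid run of $\Alg$ using $\MFD$ in which no faulty process takes a step, the run $R^0=\langle F^0, H, I, \Phi, T\rangle$ obtained by replacing $F$ with its initial crash scenario $F^0$ is again a valid run of $\Alg$ using $\MFD$. This works precisely because $\corr(F)=\corr(F^0)$ (so $H$ remains a valid $\MFD$-history under $F^0$) and because no faulty process was taking steps in $R$ anyway (so nothing in the retained schedule conflicts with the earlier crash times of $F^0$). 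Third, I would apply Proposition~\ref{prop:R0validPFDrun}: in an initial crash scenario every history of $\MFD$ is also a history of $\PFD$, hence $R^0$ is a valid run of $\Alg$ using $\PFD$. This $R^0$ is exactly the run named in the statement, and since $I$, $\Phi$, $T$, and $H$ are unchanged across these steps, the construction is complete.

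There is no real obstacle here: all three propositions have already been established, and the theorem is literally their composition. The only thing to be careful about is the bookkeeping — making sure that the objects $\Phi$, $T$, and $F^0$ fed into each proposition are exactly the ones those propositions expect, and that the failure-detector history $H$ is reused verbatim throughout (which it is, since the output of $\MFD$ depends only on $\faulty(F)=\faulty(F^0)$, and that set is preserved under passage to the initial crash scenario). If one wished to be fully explicit, one could additionally remark that validity of $R^0$ for $\Alg$ using $\PFD$ yields a corresponding interpreted run, but that observation is really part of the subsequent argument that $\MFD\redSolv\PFD$ rather than of the proof of this theorem.
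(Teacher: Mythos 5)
Your proposal is correct and matches the paper exactly: the paper states the theorem as an immediate consequence of Propositions~\ref{prop:RvalidMFDrun}, \ref{prop:R0validMFDrun}, and \ref{prop:R0validPFDrun}, which is precisely the chain you spell out. Your version simply makes the bookkeeping explicit, which the paper leaves implicit.
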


Next, we argue that if algorithm $\Alg$ solves problem~$\problem$
     using $\PFD$, then $\TrAlg$ solves $\problem$ using~$\MFD$.
Assuming that $\Alg$ solves $\problem$,  there is an interpretation
     $V_\Pi$ such that for all $F$ in $\E$ and any $w \in
     \ident{IR}(\Alg,F,\D,V_\Pi)$,  the predicate $P(w,F)$ holds.
As any interpreted run of $\Alg$ using $\PFD$ satisfies the problem,
     and since by Theorem~\ref{thm:MPcorr} every valid run of $\TrAlg$
     using $\MFD$ can be mapped to a valid run of  $\Alg$ using
     $\PFD$, we have to show that the mapping from $\widetilde{\Phi}$
     to $\Phi$ ensures that $\TrAlg$ also solves the problem using
     $\MFD$.

To this end, we obtain $\widetilde{V}_\Pi$ by defining for each
     process $p_i$ a new function $\widetilde{V}_i$ as a mapping of
     each state of $p_i$ in $\TrAlg$ to a problem state: for states $s
     \in S^\dagger_i$ we define $\widetilde{V}_i(s) =
     V_i(\ident{stall}_{i}^{-1}(s))$, and for all other states $s$ of
     $p_i$ we define $\widetilde{V}_i(s) = V_i(s)$.

As $\ident{faulty}(F) = \ident{faulty}(F^0)$, we just speak of faulty
     (or correct) processes in the following, as no confusion may
     occur.

\begin{proposition}\label{prop:correctProcessSameVi}
If $\widetilde{R}=\langle F, H, I, \widetilde{\Phi}, \widetilde{T}
     \rangle$  is valid run of $\TrAlg$ using failure detector $\MFD$
     and if~$R^0=\langle F^0, H, I, \Phi, T \rangle$ is a valid run of
     $A$ using $\PFD$, then for any correct process $p_i$ and for any
     index $\ell \ge 0$: 
$$
\widetilde{V}_i(\gamma_i(I,\widetilde{\Phi},\ell)) =
V_i(\gamma_i(I,\Phi,\ell)).
$$
\end{proposition}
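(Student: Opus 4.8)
The plan is to prove the claim by induction on the step index $\ell$, tracking the correspondence between steps of $p_i$ in $\widetilde{\Phi}$ (the run of $\TrAlg$ using $\MFD$) and steps of $p_i$ in $\Phi$ (the run of $\Alg$ using $\PFD$). The key structural fact to exploit is the way $\Phi$ was obtained from $\widetilde{\Phi}$: the entries corresponding to faulty processes were deleted, so for a \emph{correct} process $p_i$ the subsequences $\widetilde{\Phi}|_i$ and $\Phi|_i$ are \emph{identical} as sequences of steps, and hence $\gamma_i(I,\widetilde{\Phi},\ell)$ and $\gamma_i(I,\Phi,\ell)$ refer to the state of $p_i$ after its own $\ell$-th step in each run. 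So the real content is that $p_i$ takes the \emph{same} step in both runs, i.e. that the state transitions agree.

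First I would observe that $\TrAlg_i$ and $\Alg_i$ differ only in the transitions out of initial states in $\hat{Q}_i$, and there only when the failure detector output of that step contains $p_i$. Since $p_i$ is correct, the $\MFD$ history $H$ never places $p_i$ in $H(p_i,t)$ (because $\MFD$ always outputs exactly $\ident{faulty}(F)$ and $p_i \notin \ident{faulty}(F)$); therefore the ``stall'' branch of $\TrAlg_i$ is never taken by $p_i$, $p_i$ never enters a state in $S^\dagger_i$, and every step $p_i$ takes in $\widetilde{R}$ is precisely the step prescribed by $\Alg_i$. Combined with the fact that both runs start from the same initial configuration $I$ (so $\gamma_i(I,\widetilde{\Phi},0) = I|_i = \gamma_i(I,\Phi,0)$) and receive the same failure-detector values along $p_i$'s steps, an easy induction shows the sequences of states of $p_i$ coincide: $\gamma_i(I,\widetilde{\Phi},\ell) = \gamma_i(I,\Phi,\ell)$ for all $\ell$. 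Here one should also note that the messages $p_i$ receives are the same: in $R$, by Proposition~\ref{prop:RvalidMFDrun}, no faulty process takes a step and hence sends nothing, and in $\widetilde{R}$ faulty processes enter $S^\dagger$ and send nothing, so $p_i$'s received messages come only from correct processes in both runs, and the inductive hypothesis (applied to the senders) keeps these aligned.

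Finally, since $p_i$ never visits a state in $S^\dagger_i$, the definition $\widetilde{V}_i(s) = V_i(s)$ applies to every state $s$ that $p_i$ actually occupies in $\widetilde{R}$; substituting $\gamma_i(I,\widetilde{\Phi},\ell) = \gamma_i(I,\Phi,\ell)$ into $\widetilde V_i$ then yields $\widetilde{V}_i(\gamma_i(I,\widetilde{\Phi},\ell)) = V_i(\gamma_i(I,\widetilde{\Phi},\ell)) = V_i(\gamma_i(I,\Phi,\ell))$, which is the statement. The main obstacle, and the only place requiring genuine care, is the joint induction over all correct processes simultaneously: the claim for $p_i$ at step $\ell$ depends on which messages $p_i$ received, which were sent by other correct processes at earlier points of their \emph{own} step sequences — so the induction should really be on the global step index in $\widetilde{\Phi}$ (or on a well-founded order that dominates the send/receive causality), with the per-process indexing recovered afterward. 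Once the bookkeeping of matching message send/receive events between $\widetilde{R}$ and $R$ is set up correctly, the state-equality and interpretation-equality both fall out immediately.
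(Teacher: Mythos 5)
Your proposal is correct and follows essentially the same route as the paper's proof: a correct process is never suspected by $\MFD$, hence never takes the stall branch or enters $S^\dagger_i$, its step sequence in $\Phi$ coincides with that in $\widetilde{\Phi}$ (since only faulty processes' steps were deleted and those processes send no messages), and therefore $\widetilde{V}_i$ agrees with $V_i$ on every state it visits. The paper states these facts more tersely without spelling out the induction on the global step index, but the underlying argument is the same.
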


\begin{proof}
Since, $\Phi$ is constructed from $\widetilde{\Phi}$ by deleting the
     no-op steps taken by faulty processes, we know that each correct
     process $p_i$ follows the same sequence of states in
     $\widetilde{\Phi}$ and $\Phi$.
That is, $\gamma_i(I,\widetilde{\Phi},\ell) = \gamma_i(I,\Phi,\ell)$.
Since $p_i$ is correct, $p_i$ is never suspected by both $\MFD$ and
     $\PFD$.
Therefore, in $\widetilde{R}$, $p_i$ is never in any state in
     $S^\dagger$.
Hence, for each state $s$ that $p_i$ is in $\widetilde{R}$,
     $\widetilde{V}_i(s) = V_i(s)$.
In other words, $\widetilde{V}_i(\gamma_i(I,\widetilde{\Phi},\ell)) =
     V_i(\gamma_i(I,\Phi,\ell))$.
\end{proof}

\begin{proposition}\label{prop:faultyProcessSameVi}
If $\widetilde{R}=\langle F, H, I, \widetilde{\Phi}, \widetilde{T}
     \rangle$  is valid run of $\TrAlg$ using failure detector $\MFD$
     and if~$R^0=\langle
     F^0, H, I, \Phi, T \rangle$ is a valid run of $A$ using $\PFD$,
     then for any faulty process $p_i$ and for any index $\ell\ge 0$:
$$\widetilde{V}_i(\gamma_i(I,\widetilde{\Phi},\ell)) =
     V_i(\gamma_i(I,\Phi,\ell)).$$ 

\end{proposition}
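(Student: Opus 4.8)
The plan is to read off both sides directly from the Stall-on-Suspect construction and the defining property of $\MFD$, after which everything reduces to tracking two absorbing states. Fix a faulty process $p_i$ and write $q = I|_i \in \hat{Q}_i$ for its initial state. First I would determine the sequence of states $p_i$ visits in $\widetilde{R}$. Whenever $p_i$ takes a step in $\widetilde{R}$, say at time $\widetilde{T}[\ell]$, it is live at that time by validity of $\widetilde{R}$, so by the definition of $\MFD$ its failure-detector output at that step equals $\ident{faulty}(F)$, which contains $p_i$. Hence, by the definition of $\mathfrak{F}_{SoS}$, the first step of $p_i$ (if any) carries it from $q \in \hat{Q}_i$ into $\ident{stall}_i(q) \in S^\dagger_i$, and, again by the construction of $\TrAlg_i$, every later step keeps $p_i$ in $\ident{stall}_i(q)$. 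So $\gamma_i(I,\widetilde{\Phi},0) = q$ and $\gamma_i(I,\widetilde{\Phi},\ell) = \ident{stall}_i(q)$ for all $\ell \ge 1$; and in the degenerate case where $p_i$ takes no step at all, $\gamma_i(I,\widetilde{\Phi},\ell) = q$ for all $\ell$.

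Next I would evaluate $\widetilde{V}_i$ on these states. Since $q$ is an initial state of $\TrAlg_i$ and $S^\dagger_i$ contains no initial states, $q \notin S^\dagger_i$, so $\widetilde{V}_i(q) = V_i(q)$ by the ``other states'' clause in the definition of $\widetilde{V}_i$. For the stall state, the ``$S^\dagger_i$ clause'' gives $\widetilde{V}_i(\ident{stall}_i(q)) = V_i(\ident{stall}_i^{-1}(\ident{stall}_i(q))) = V_i(q)$, using that $\ident{stall}_i$ is a bijection. Combining with the previous paragraph, $\widetilde{V}_i(\gamma_i(I,\widetilde{\Phi},\ell)) = V_i(q)$ for every $\ell \ge 0$.

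For the right-hand side, recall that $\Phi$ is obtained from $\widetilde{\Phi}$ by deleting all steps of faulty processes and that in $R^0$ every faulty process crashes at time $0$; hence $p_i$ takes no step in $R^0$, and by the crash convention for $\gamma_i$ we get $\gamma_i(I,\Phi,\ell) = I|_i = q$ for all $\ell \ge 0$. Therefore $V_i(\gamma_i(I,\Phi,\ell)) = V_i(q)$, which matches the left-hand side, proving the proposition.

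There is no real obstacle here; the one thing one must notice is that the stall states were set up precisely so that $\widetilde{V}_i$ can ``undo'' the stalling — they are non-initial and in bijection with the initial states via $\ident{stall}_i$ — so that a faulty process of $\TrAlg$, which in $\widetilde{R}$ sits forever in $\ident{stall}_i(q)$, is interpreted exactly like a faulty process of $\Alg$ in $R^0$, which never leaves $q$. The minor bookkeeping points are the case split on whether $p_i$ ever takes a step in $\widetilde{R}$ and the use of the $\gamma_i$ crash convention on the $R^0$ side.
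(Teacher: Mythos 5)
Your proof is correct and follows essentially the same route as the paper's: compute the state sequence of a faulty $p_i$ in $\widetilde{R}$ (initial state $q$, then absorbed into $\ident{stall}_i(q)$), note that $\widetilde{V}_i$ maps both to $V_i(q)$, and observe that in $R^0$ the process never leaves $I|_i=q$. Your version is in fact slightly more careful than the paper's, since you explicitly invoke the $\MFD$ definition to justify self-suspicion and handle the degenerate case where $p_i$ takes no step.
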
 

\begin{proof}
Since faulty processes do not take any steps in $R^0$, we know that
     for each faulty process $p_i$, and each index $\ell \geq 0$ in
     run $R^0$, $\gamma_i(I,\Phi,\ell) = I|_i$.

In run $\widetilde{R}$, we know from the construction of algorithm
     $\TrAlg$ that each faulty process $p_i$, initially, in state
     $\hat{q}_i \in \hat{Q}_i$, enters a state $s^\dagger_i \in
     S^\dagger_i$ in its first step where $s^\dagger_i =
     \ident{stall}(\hat{q}_i)$, and remains there until it crashes.
Therefore, for each faulty process $p_i$, and each index $\ell \geq 0$
     in run $\widetilde{R}$, $\gamma_i(I,\Phi,\ell) \in \{\hat{q}_i,
     s^\dagger_i\}$.

From the definition of $\widetilde{V}_i$, we know that
     $\widetilde{V}_i(\hat{q}_i) = V_i(\hat{q}_i)$, and
     $\widetilde{V}_i(s^{\dagger}_i) = V_i(stall^{-1}_i(s))$.
As $s^\dagger_i = \ident{stall}(\hat{q}_i)$, we obtain
     $\widetilde{V}_i(s^\dagger_i) = V_i(\hat{q}_i)$.
Therefore, for each faulty process $p_i$, and each index $\ell \geq 0$
     in run $\widetilde{R}$, $\widetilde{V}_i(\gamma_i(I,\Phi,\ell)) =
     V_i(\hat{q}_i)$.

Since each process $p_i$ is in the same initial state in
     $\widetilde{R}$ and $R^0$, we have $\hat{q}_i = I|_i$.
Therefore, $\widetilde{V}_i(\gamma_i(I,\Phi,\ell)) = V_i(\hat{q}_i) =
     V_i(\gamma_i(I,\Phi,\ell))$.
\end{proof}

\begin{theorem} 
If $\Alg$ solves $\problem$ using~$\PFD$ then
     $ \TrAlg=\mathfrak{F}_{SoS}(A)$ solves $\problem$ using $\MFD$.
\end{theorem}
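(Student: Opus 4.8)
The plan is to produce, from the interpretation $V_\Pi$ witnessing that $\Alg$ solves $\problem$ using $\PFD$, an interpretation under which $\TrAlg$ solves $\problem$ using $\MFD$; the natural candidate is the interpretation $\widetilde{V}_\Pi$ assembled from the functions $\widetilde{V}_i$ defined above. First I would note this is a legitimate interpretation: the states $S^\dagger_i$ are not initial in $\TrAlg_i$, and on the initial states $\hat{Q}_i$ we have $\widetilde{V}_i = V_i$, so $\widetilde{V}_i$ is surjective onto $\hat{\sigma}$. Then I would fix an arbitrary $F \in \E$ and an arbitrary interpreted run $\widetilde{w} \in \ident{IR}(\TrAlg, F, \MFD, \widetilde{V}_\Pi)$, say $\widetilde{w} = \ident{ir}(\widetilde{R}, \widetilde{V}_\Pi)$ for a valid run $\widetilde{R} = \langle F, H, I, \widetilde{\Phi}, \widetilde{T} \rangle$ of $\TrAlg$ using $\MFD$; the goal is to prove $\problem(\widetilde{w}, F)$.

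By Theorem~\ref{thm:MPcorr} there is a valid run $R^0 = \langle F^0, H, I, \Phi, T \rangle$ of $\Alg$ using $\PFD$, where $\Phi$ and $T$ are obtained from $\widetilde{\Phi}$ and $\widetilde{T}$ by deleting the steps of faulty processes. Put $w^0 = \ident{ir}(R^0, V_\Pi)$. Since $\Alg$ solves $\problem$ using $\PFD$ via $V_\Pi$ and $w^0 \in \ident{IR}(\Alg, F^0, \PFD, V_\Pi)$, the predicate $\problem(w^0, F^0)$ holds. As $\faulty(F^0) = \faulty(F)$, and hence $\corr(F^0) = \corr(F)$, the crash time independence property of $\problem$ gives $\problem(w^0, F) = \problem(w^0, F^0)$, so $\problem(w^0, F)$ holds.

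It remains to connect $\widetilde{w}$ with $w^0$, and here I would show that $\widetilde{w}$ is a finite stutter of $w^0$, i.e.\ $w^0 \sqsubseteq \widetilde{w}$; the finite stuttering property of $\problem$ then yields $\problem(\widetilde{w}, F) = \problem(w^0, F)$, completing the proof. The intuition is that $\widetilde{\Phi}$ differs from $\Phi$ only by the no-op stall steps of faulty processes, of which there are finitely many in total (each faulty process crashes at a finite time and $\Pi$ is finite). Inserting such a step touches no link state, alters only the executing faulty process' own state, and — the point that makes this work — leaves its interpreted value unchanged, because the step is taken either from $\hat{q}_i$ (mapped by $\widetilde{V}_i$ to $V_i(\hat{q}_i)$) into $\ident{stall}_i(\hat{q}_i)$ (mapped by $\widetilde{V}_i$ to $V_i(\ident{stall}_i^{-1}(\ident{stall}_i(\hat{q}_i))) = V_i(\hat{q}_i)$), or from a state in $S^\dagger_i$ back into itself. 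Combining this with Propositions~\ref{prop:correctProcessSameVi} and~\ref{prop:faultyProcessSameVi}, one checks that the interpreted configuration sequence of $\widetilde{R}$ is obtained from that of $R^0$ by inserting finitely many verbatim copies of configurations already present, and that the two sequences share the same initial problem configuration (again because $\widetilde{V}_i$ agrees with $V_i$ on $\hat{Q}_i$); this is precisely finite stuttering.

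I expect the main obstacle to be exactly this last bookkeeping: making precise the position alignment between $\widetilde{\Phi}$ and $\Phi$ and verifying that each configuration of $\widetilde{w}$ either coincides with the corresponding configuration of $w^0$ or is a verbatim repetition of its predecessor, so that only finitely many one-step stutters separate $w^0$ from $\widetilde{w}$. Everything else — the existence of $R^0$, the behaviour of correct and faulty processes under $\widetilde{V}_\Pi$, and the two structural properties of $\problem$ — is already in hand. Since $F$ and $\widetilde{w} \in \ident{IR}(\TrAlg, F, \MFD, \widetilde{V}_\Pi)$ were arbitrary, it follows that $\TrAlg = \mathfrak{F}_{SoS}(\Alg)$ solves $\problem$ using $\MFD$ via the interpretation $\widetilde{V}_\Pi$.
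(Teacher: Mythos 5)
Your proposal is correct and follows essentially the same route as the paper's proof: construct $R^0$ via Theorem~\ref{thm:MPcorr}, use Propositions~\ref{prop:correctProcessSameVi} and~\ref{prop:faultyProcessSameVi} to show $\ident{ir}(R^0,V_\Pi)\sqsubseteq\ident{ir}(\widetilde{R},\widetilde{V}_\Pi)$, and close with the finite stuttering property. You are in fact slightly more careful than the paper in two places it glosses over\dash---the explicit appeal to crash time independence to pass from $F^0$ back to $F$, and the observation that the inserted stall steps are finitely many so the stutter is indeed finite.
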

\begin{proof}
 Since  $\Alg$ solves $\problem$ using $\PFD$, we know that there exists a function $V_\Pi$ such that for any $w \in \ident{IR}(\Alg,F,\PFD,V_\Pi)$,  the predicate $P(w,F)$ is true.

Let $\TrAlg = \mathfrak{F}_{SoS}(\Alg)$, and let $\widetilde{R} =
     \langle F, H, I, \widetilde{\Phi}, \widetilde{T}\rangle$ be an
     arbitrary valid run of $\TrAlg$ using $\MFD$.
Let $R^0 = \langle F^0, H, I, \Phi, T\rangle$ be a valid run of $\Alg$
     using $\PFD$ where $\forall t \in \T: F^0(t) = faulty(F)$, $\Phi$
     and $T$ are obtained by deleting the entries associated with
     faulty processes in $\widetilde{\Phi}$ and $\widetilde{T}$,
     respectively.
From Propositions  \ref{prop:RvalidMFDrun}, \ref{prop:R0validMFDrun}
     and \ref{prop:R0validPFDrun}, we know that $R^0$ is a valid run
     of $\Alg$ using $\PFD$.
Therefore, each $w \in \ident{IR}(\Alg,F^0,\PFD,V_\Pi)$  satisfies
     $P(w,F^0)$.

Let $\widetilde{V}_\Pi$ be a function derived from $V_\Pi$ as
     described earlier in this section.
From Propositions \ref{prop:correctProcessSameVi} and
     \ref{prop:faultyProcessSameVi}, we conclude that  for all
     processes $p_i$ and all indexes $\ell$ in runs $\widetilde{R}$
     and $R^0$, $\widetilde{V}_i(\gamma_i(I,\Phi,\ell)) =
     V_i(\gamma_i(I,\Phi,\ell))$.
Note that there is no re-ordering of steps of correct processes
     between $\Phi$ and $\widetilde{\Phi}$; however, steps of faulty
     processes may be missing in $R^0$.
Thus, we infer $\ident{ir}(R^0,V_\Pi) \sqsubseteq
     \ident{ir}(\widetilde{R},V_\Pi)$.
From the finite stuttering property of problems and
     Theorem~\ref{thm:MPcorr}, we conclude that if $\Alg$ solves
     $\problem$ using~$\PFD$ then $ \TrAlg=\mathfrak{F}_{SoS}(A)$
     solves $\problem$ using $\MFD$.
\end{proof}

\begin{corollary}
$\MFD\redSolv\PFD$ and $\PFD\not\redSolv\MFD$.
\end{corollary}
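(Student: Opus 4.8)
The plan is to prove the two conjuncts separately, obtaining $\MFD\redSolv\PFD$ from the machinery already in place and $\PFD\not\redSolv\MFD$ from a problem-separation argument. For $\MFD\redSolv\PFD$ I would simply invoke the preceding theorem: if $\problem$ is any problem solvable using $\PFD$, then by definition there is an algorithm $\Alg$ solving $\problem$ using $\PFD$, and the preceding theorem gives that $\TrAlg=\mathfrak{F}_{SoS}(\Alg)$ solves $\problem$ using $\MFD$; as $\problem$ was arbitrary, every problem solvable using $\PFD$ is solvable using $\MFD$, which is precisely $\MFD\redSolv\PFD$.

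For $\PFD\not\redSolv\MFD$ it suffices to exhibit a single problem solvable using $\MFD$ but not using $\PFD$, and I would take the strong consensus problem of Section~\ref{sec:examples}. First I would check that strong consensus really is a problem in the sense of Section~\ref{sec:problemDefinition}: use problem states $(p,d)$ with $p\in\{0,1\}$ and $d\in\{\bot,0,1\}$ as for ordinary consensus, and let the predicate require that every correct process eventually enters and then keeps a state with a non-$\bot$ decision value, that all decisions ever held by correct processes coincide, and that this common value is the input of some correct process whenever a correct process exists. This predicate refers to the failure pattern only through $\ident{correct}(F)$, so crash time independence holds; and since a stutter inserts only configurations whose $i$-th component equals one of the two neighbouring components, no new decision value can appear on any process's trace, so the predicate is stutter-invariant. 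For solvability using $\MFD$ I would use the algorithm sketched in Section~\ref{sec:examples}: each process broadcasts its input; because $\MFD$ reports exactly $\ident{faulty}(F)$ to every live process, the complement of its output is $\ident{correct}(F)$, so each correct process eventually receives the inputs of all correct processes (links are reliable and correct processes take infinitely many steps and hence send their input) and decides the input of the correct process with the smallest identifier. The interpretation mapping each state to its (input, decision) pair then witnesses that this algorithm solves strong consensus for every failure pattern.

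The substantive step is the impossibility of solving strong consensus using $\PFD$. Suppose for contradiction that some algorithm $\Alg$ with interpretation $V_\Pi$ solves it using $\PFD$, and that $|\Pi|=n\ge 2$. Consider the run $\rho$ in which all $n$ processes are correct and take infinitely many steps, $p_0$ has input $0$, and $p_1,\dots,p_{n-1}$ have input $1$; by strong accuracy the only $\PFD$-history for the all-correct failure pattern is identically empty, so nobody is ever suspected in $\rho$. Since $\ident{ir}(\rho,V_\Pi)$ must satisfy strong consensus, all processes decide a common value $v\in\{0,1\}$ and there is an index $\ell_0$ by which every process is in a decided state. If $v=0$, I would build the run $\rho_0$ whose prefix consists of exactly the first $\ell_0$ steps of $\rho$, with the same (empty) failure-detector outputs, and which thereafter lets $p_0$ crash while $p_1,\dots,p_{n-1}$ keep taking infinitely many steps; the $\PFD$-history that is empty until $p_0$'s crash and suspects $p_0$ forever afterwards is valid for this pattern, so this prefix is legitimate, and at step $\ell_0$ of $\rho_0$ the processes $p_1,\dots,p_{n-1}$ are in a decided state with value $0$. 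But in $\rho_0$ the correct processes are exactly $p_1,\dots,p_{n-1}$, all with input $1$, so $\ident{ir}(\rho_0,V_\Pi)$ violates strong validity. If $v=1$, the mirror run $\rho_1$---first $\ell_0$ steps of $\rho$, then crash $p_1,\dots,p_{n-1}$ and keep $p_0$---leaves the unique correct process $p_0$ (whose input is $0$) decided on $1$, again violating strong validity. In either case $\Alg$ has a valid run using $\PFD$ whose interpreted run fails the predicate, so no algorithm solves strong consensus using $\PFD$, and hence $\PFD\not\redSolv\MFD$. I expect the only delicate point to be verifying that $\rho_0$ and $\rho_1$ are bona fide valid runs whose prefixes are indistinguishable to the surviving correct processes from the prefix of $\rho$---that strong accuracy of $\PFD$ forbids any premature suspicion of the process about to crash, and that the in-transit messages to the surviving processes can be delivered so as to complete the run validly; everything else is unwinding the definitions of Section~\ref{sec:FDmodel}.
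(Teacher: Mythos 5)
Your proposal is correct and follows the paper's own route: the first conjunct is exactly the preceding theorem applied to an arbitrary problem solvable using $\PFD$, and the second conjunct is the strong-consensus separation sketched in Section~\ref{sec:examples}. The paper merely asserts that strong consensus is unsolvable using $\PFD$; your partitioning/indistinguishability argument (a common decided prefix under the empty history, then crashing whichever side did not supply the decided input) correctly supplies that missing detail, including the needed checks that the shifted history satisfies strong accuracy and completeness and that the predicate is crash-time independent and stutter-invariant.
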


\section{Equivalence among $\PFD_k$ failure detectors}
\label{sec:DK}

\subsection{Algorithmic transformation: Delay-a-Step}
\label{subsubsec:atdas}

Informally, the \emph{Delay-a-Step} transformation (DaS) converts an
     algorithm~$\Alg$ to an algorithm~$\TrAlg$ such that in $\TrAlg$
     each process $p_i$ first executes a single no-op step, and
     subsequently~$p_i$ behaves exactly like it does in $\Alg$.
We define a transformation function $\mathfrak{F}_{DaS}$ that maps an
     algorithm $\Alg = (\Alg_i)_{\forall p_i \in \Pi}$   to a new
     algorithm $\TrAlg = (\TrAlg_i)_{\forall p_i \in \Pi}$.
The new state space of~$\TrAlg$ is constructed as follows.
For each process $p_i$, we add a new set of states $S^\star_i$, which are the initial states of ~$\TrAlg_i$, such
     that $|S^\star_i| = |\hat{Q}_i|$, to obtain the set of states
     for~$\TrAlg_i$.
This implies that the states in $\hat{Q}_i$ are \emph{not} initial
     states of $\TrAlg_i$.
We define a bijective function $\ident{delay}_i\colon S^\star_i
     \rightarrow \hat{Q_i}$.

The state transitions of $\TrAlg$ are the state transition of $\Alg$
     and the following rules for initial states $S^\star_i$: if a
     process $p_i$ is in state $s\in S^\star_i$ when it takes a step,
     then $p_i$ neither receives nor sends messages and goes into
     state $\ident{delay}_i(s)$.

\subsection{Showing $\PFD_{k+1}$ is at least as strong as $\PFD_{k}$}

Let $\Alg$ be an algorithm that solves some problem $\problem$ using a
     failure detector~$\PFD_k$, and let $\TrAlg =
     \mathfrak{F}_{DaS}(A)$.
The remainder of this section  shows that $\TrAlg$ solves $\problem$
     using the failure detector $\PFD_{k+1}$.

Let $\widetilde{R}=\langle \widetilde{F}, \widetilde{H},
     \widetilde{I}, \widetilde{\Phi}, \widetilde{T} \rangle$ be a
     valid run of $\TrAlg$ using $\PFD_{k+1}$.
In the following, we construct (in several steps)  a new initial
     configuration $I$, a new schedule $\Phi$, a new time-sequence
     $T$, a new failure pattern $F$, and a new history $H$ such that
     the run $R = \langle F,H,I,\Phi,T \rangle$ is a valid run of
     $\Alg$ using the failure detector~$\PFD_k$.
We then show that if $R$ is a valid run of $\Alg$ using $\PFD_k$, then
     $\TrAlg$ solves problem $\problem$ using $\PFD_{k+1}$.

First, we construct the initial configuration $I$ as follows.
For each process $p_i$, $I|_i = delay_i(\widetilde{I}|_i)$.

Next, we construct the new schedule $\Phi$ and a new time-sequence
     $T'$ as follows.
For each process $p_i \in \Pi$, let $\ident{no-op}(i)$ denote the
     index of the first entry of the form $(p_i,*,*,*,*,*)$
     in~$\widetilde{\Phi}$.
The schedule $\Phi$ is obtained by deleting for each process $p_i$ the
     step $\widetilde{\Phi}[\ident{no-op}(i)]$
     from~$\widetilde{\Phi}$.
A time-sequence $T'$ is obtained by deleting for each process $p_i$
     the entry $\widetilde{T}[\ident{no-op}(i)]$ from~$\widetilde{T}$.

\begin{proposition}\label{lem:RprimeIsARunOfAlg}
If $\widetilde{R}=\langle \widetilde{F}, \widetilde{H}, \widetilde{I},
     \widetilde{\Phi}, \widetilde{T} \rangle$ is a valid run of
     $\TrAlg$ using $\PFD_{k+1}$ then $R'=\langle \widetilde{F},
     \widetilde{H}, I, \Phi, T' \rangle$ is a valid run of $\Alg$
     using $\PFD_{k+1}$.
\end{proposition}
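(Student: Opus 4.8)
The plan is to verify directly that $R' = \langle \widetilde{F}, \widetilde{H}, I, \Phi, T' \rangle$ satisfies every consistency requirement in the definition of a valid run from Section~\ref{sec:FDmodel}, deriving each one from the corresponding property of $\widetilde{R}$. The single fact that drives the whole argument is that, by the Delay-a-Step construction, for each process $p_i$ the deleted step $\widetilde{\Phi}[\ident{no-op}(i)]$ is $p_i$'s first step, it neither receives nor sends a message, and it moves $p_i$ from its initial state $\widetilde{I}|_i \in S^\star_i$ to $\ident{delay}_i(\widetilde{I}|_i) \in \hat{Q}_i$. In particular $\Phi$ is a subsequence of $\widetilde{\Phi}$ (with $T'$ the parallel subsequence of $\widetilde{T}$), each retained step is unchanged, and the multiset of send/receive events occurring in $\Phi$ is exactly that of $\widetilde{\Phi}$. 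Note that this proposition stays entirely within $\PFD_{k+1}$, so no properties distinguishing $\PFD_k$ from $\PFD_{k+1}$ are needed here.

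First I would dispatch the easy conditions. Correct processes take infinitely many steps in $R'$ because they do so in $\widetilde{R}$ and we remove only one step per process; the set of correct processes is unchanged since $F = \widetilde{F}$. For each $\Phi[\ell] \equiv (p_i,s,m,d,s',m')$, which equals some $\widetilde{\Phi}[\ell']$ taken at the same time $T'[\ell] = \widetilde{T}[\ell']$, liveness ($p_i \notin \widetilde{F}(\widetilde{T}[\ell'])$) and failure-detector consistency ($d = \widetilde{H}(p_i,\widetilde{T}[\ell'])$, with $\widetilde{H} \in \PFD_{k+1}(\widetilde{F})$) transfer verbatim. The ``no spurious messages'' and ``reliable transmission'' clauses follow because the deleted no-op steps carry no messages, so a message received in $\Phi[\ell]$ was sent in a surviving step of $\widetilde{\Phi}$ occurring earlier, and conversely; deletion of message-free steps preserves the pairing of sends with their receipts. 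The ``state unchanged between consecutive steps'' clause transfers because, for any process $p_i$, the only step we delete is $p_i$'s first one, so consecutive surviving steps of $p_i$ in $\Phi$ were already consecutive steps of $p_i$ in $\widetilde{\Phi}$.

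The one clause with real content --- and the only place where the construction of $I$ is used --- is the requirement that if $\Phi[\ell]$ is $p_i$'s first step in $R'$ then it starts in state $I|_i$. Here I would argue that $\Phi[\ell]$ is then $p_i$'s \emph{second} step in $\widetilde{R}$, immediately preceded by the no-op step $\widetilde{\Phi}[\ident{no-op}(i)]$; validity of $\widetilde{R}$ forces that no-op step to start in $\widetilde{I}|_i$ and, by the Delay-a-Step transition rule, to end in $\ident{delay}_i(\widetilde{I}|_i)$; then the ``state unchanged'' property of $\widetilde{R}$ forces $p_i$'s second step to start in that same state, which is exactly $I|_i$ by our definition $I|_i = \ident{delay}_i(\widetilde{I}|_i)$. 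Assembling these observations shows that $R'$ is a valid run of $\Alg$ using $\PFD_{k+1}$. I do not expect a genuine obstacle; the main thing to keep straight is the index bookkeeping between $\widetilde{\Phi}$ and $\Phi$, so that ``first step in $R'$'' is correctly identified with ``second step in $\widetilde{R}$'' and deleted steps of $p_i$ always precede all surviving steps of $p_i$.
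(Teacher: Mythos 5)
Your proof is correct and follows essentially the same route as the paper's: both hinge on the observations that each deleted step is the process's first step, is a no-op carrying no messages, and leaves the process in $\ident{delay}_i(\widetilde{I}|_i)=I|_i$, so that the remaining schedule and time-sequence are valid for $\Alg$. You simply spell out the individual validity conditions more explicitly than the paper does.
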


\begin{proof}
By construction, the first step of each process $p_i$ in $\TrAlg$ is
     of the form $(p_i,*,\bot,d,*,\bot)$, and all the subsequent steps
     of $p_i$ are the same as in $\Alg$.
Since $\widetilde{\Phi}$ is a schedule of $\TrAlg$, we see that for
     each process $p_i$, $\widetilde{\Phi}[\ident{no-op}(i)]$ is the
     first step of $p_i$ executing $\TrAlg_i$, and is therefore a
     no-op step of the form $(p_i,*,\bot,d,*,\bot)$.
Also, note that upon executing a no-op step from
     state~$\widetilde{I}|_i$, process $p_i$ transitions to state
     $delay_i(\widetilde{I}|_i)$ which, by construction, is equal to
     the state~$I|_i$.  

Hence, by deleting the $\widetilde{\Phi}[\ident{no-op}(i)]$ step for each
     process~$p_i$ from~$\widetilde{\Phi}$, we obtain a valid schedule
     for $\Alg$; that is, $\Phi$ is a valid schedule for a run of
     $\Alg$.
Similarly, by deleting the times at which the
     $\widetilde{\Phi}[\ident{no-op}(i)]$ step occurred for each process $p_i$
     from $\widetilde{T}$, we obtain a valid time-sequence for $\Alg$;
     that is, $T'$ is a valid time-sequence for the schedule $\Phi$ in
     a run of $\Alg$.
The proposition follows.
\end{proof}

Then we define the new failure pattern $F$ by $F(t) =
     \widetilde{F}(t+1)$, for $t \in \T$.
Intuitively, each faulty process crashes one time unit earlier in $F$
     than in $\widetilde{F}$.
Similarly, the new history~$H$ is defined by $H(p_i,t) =
     \widetilde{H}(p_i,t+1)$, for all $p_i \in \Pi$ and $t \in \T$.

\begin{proposition}\label{lem:HinDk}
If $\widetilde{H} \in \PFD_{k+1}(\widetilde{F})$  then $H \in \PFD_k(F)$.
\end{proposition}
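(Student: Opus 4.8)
The plan is to show directly that the shifted history $H$ satisfies the two defining clauses of $\PFD_k$ relative to the shifted failure pattern $F$ --- namely $k$-Accuracy and $k$-Completeness --- by invoking the corresponding clauses for $\widetilde H$ relative to $\widetilde F$ (i.e.\ $(k+1)$-Accuracy and $(k+1)$-Completeness) at time arguments shifted up by one. First I would record the elementary consequences of the definitions $F(t)=\widetilde F(t+1)$ and $H(p_i,t)=\widetilde H(p_i,t+1)$. Since crashes are permanent, $\ident{live}(F,t)=\ident{live}(\widetilde F,t+1)$ for every $t$; in particular $\ident{live}(F,k)=\ident{live}(\widetilde F,k+1)$, which is exactly the alignment of the ``protected'' sets of processes that makes the proof go through. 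A one-step shift also leaves the correct and faulty sets unchanged, so $\corr(F)=\corr(\widetilde F)$ and $\ident{faulty}(F)=\ident{faulty}(\widetilde F)$.

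For $k$-Accuracy, I would fix $p_i,p_j,t$ and assume $p_j\in\ident{live}(F,k)$ and $p_j\notin F(t)$; by the observations above this reads $p_j\in\ident{live}(\widetilde F,k+1)$ and $p_j\notin\widetilde F(t+1)$, so $(k+1)$-Accuracy of $\widetilde H$ applied at time $t+1$ yields $p_j\notin\widetilde H(p_i,t+1)=H(p_i,t)$, as required. For $k$-Completeness, I would fix $p_i,p_j$, let $\widetilde{t'}$ be the stabilization time guaranteed by $(k+1)$-Completeness of $\widetilde H$ for this pair, and take $t':=\widetilde{t'}$: for any $t>t'$, if $p_j\in\ident{live}(F,k)\wedge p_j\in\ident{faulty}(F)\wedge p_i\in\corr(F)$, then the translated antecedent holds for $\widetilde F$, and since $t+1>\widetilde{t'}$ we get $p_j\in\widetilde H(p_i,t+1)=H(p_i,t)$. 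As $\PFD_k(F)$ is precisely the set of histories satisfying $k$-Accuracy and $k$-Completeness for $F$, this establishes $H\in\PFD_k(F)$.

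I expect no real obstacle here: the entire content is the bookkeeping of the index shift and the invariance of $\corr$, $\ident{faulty}$, and $\ident{live}$ under it. The one point worth stressing is that it is exactly the increment from $k$ to $k+1$ that makes $\ident{live}(F,k)$ coincide with $\ident{live}(\widetilde F,k+1)$; this is why the Delay-a-Step transformation inserts precisely one no-op step, and inserting any other number would break this lemma.
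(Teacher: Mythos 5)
Your proposal is correct and follows essentially the same route as the paper: both proofs verify $k$-Accuracy and $k$-Completeness for $H$ and $F$ directly by instantiating the $(k{+}1)$-properties of $\widetilde H$ and $\widetilde F$ at time arguments shifted by one and using $\corr(F)=\corr(\widetilde F)$. The paper performs the shift by substituting into the displayed formulas while you argue pointwise with fixed $p_i,p_j,t$, but the content is identical; your closing remark about why exactly one no-op step is inserted is a nice addition, though not needed for the proof.
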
 

\begin{proof} 
Since $\widetilde{H} \in \PFD_{k+1}(\widetilde{F})$, it follows from
$k$-Accuracy that
\begin{equation}\label{eqn:k+1-accuracy}
\forall p_i,p_j \in \Pi,\;
\forall t \in \T\; \colon
 (p_j \notin \widetilde{F}(k+1) \wedge p_j \in 
   \widetilde{H}(p_i, t+1)) \Rightarrow p_j \in \widetilde{F}(t+1),
 \end{equation}
and from $k$-Completeness
\begin{multline}\label{eqn:k+1-completeness}
\forall p_i,p_j \in \Pi,\;
\exists t' \in \T, \;
\forall t>t'\; 
 \colon \\
(p_j \notin \widetilde{F} (k+1) \wedge p_j \notin
 correct(\widetilde{F} ) 
\wedge p_i \in correct (\widetilde{F}) ) \Rightarrow p_j \in \widetilde{H}(p_i, t).
\end{multline}

Since $\forall t \in \T: F(t) = \widetilde{F}(t+1)$, and $\forall p_i
     \in \Pi, \forall t \in \T: H(p_i,t) = \widetilde{H}(p_i,t+1)$,
     substituting these functions in Equations~(\ref{eqn:k+1-accuracy})
     and~(\ref{eqn:k+1-completeness}) we obtain
\begin{equation}\label{eqn:k-accuracy}
\forall p_i,p_j \in \Pi,\;
\forall t \in \T\; :
 (p_j \notin F(k) \wedge p_j \in H(p_i, t)) \Rightarrow p_j \in F(t),
 \end{equation}
and since $correct(F) =
     correct(\widetilde{F})$,
\begin{multline}\label{eqn:k-completeness}
\forall p_i,p_j \in \Pi,\;
\exists t' \in \T, \; 
\forall t>t'\; \colon \\
(p_j \notin F(k) \wedge p_j \notin correct(F) \wedge p_i \in correct
(F))
 \Rightarrow p_j \in H(p_i, t).
\end{multline}
We observe that the failure detector whose histories are as described in
Equations~(\ref{eqn:k-accuracy}) and~(\ref{eqn:k-completeness}) 
     satisfies $k$-Accuracy and $k$-Completeness.
\end{proof}

Because $T'$ is obtained by removing the time of the first step of each
     process, it follows that for any $\ell$, $T'[\ell]>0$.
We may thus define the new time-sequence $T$ as $T[\ell] = T'[\ell]-1$
     with $\ell\in\Naturals$.

\begin{proposition}\label{lem:RIsAValidRun}
If $R'=\langle \widetilde{F}, \widetilde{H}, I, \Phi, T' \rangle$ is a
     valid run of $\Alg$ using $\PFD_{k+1}$, then $R=\langle F, H, I,
     \Phi, T \rangle$ is a valid run of $\Alg$ using $\PFD_{k}$.
\end{proposition}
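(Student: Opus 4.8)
The plan is to verify that $R=\langle F, H, I, \Phi, T\rangle$ satisfies all the consistency conditions for a valid run of $\Alg$ using $\PFD_k$, as listed in Section~\ref{sec:FDmodel}, using the fact (Proposition~\ref{lem:RprimeIsARunOfAlg}) that $R'=\langle \widetilde{F},\widetilde{H},I,\Phi,T'\rangle$ is already a valid run of $\Alg$ using $\PFD_{k+1}$. The only differences between $R$ and $R'$ are that the failure pattern is shifted one time unit earlier ($F(t)=\widetilde{F}(t+1)$), the history is shifted likewise ($H(p_i,t)=\widetilde{H}(p_i,t+1)$), and every step time is decremented by one ($T[\ell]=T'[\ell]-1$), while $I$ and $\Phi$ are unchanged. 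I would organize the verification around which conditions are purely syntactic (depending only on $I$ and $\Phi$) and which involve timing or the failure detector.

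First I would observe that $H\in\PFD_k(F)$, which is exactly Proposition~\ref{lem:HinDk}, so the history is a legal $\PFD_k$-history for $F$. Next, the conditions that depend only on the initial configuration and the schedule --- namely that the first step of each process starts from $I|_i$, that process states do not change between consecutive steps of that process, that there are no spurious received messages, and that message transmission is reliable (at most one matching receive for each send, and exactly one if the sender is correct) --- carry over verbatim from $R'$, since $\Phi$ and $I$ are identical in $R$ and $R'$. The one subtlety here is the phrase ``if the sender is correct,'' which refers to $\corr(F)$ versus $\corr(\widetilde{F})$; but $\corr(F)=\corr(\widetilde{F})$ because $F$ merely shifts $\widetilde{F}$, so the set of correct processes is unchanged and the reliability clause transfers unchanged. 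I would also note that correct processes take infinitely many steps in $R$ iff they do in $R'$, again because $\Phi$ is the same sequence.

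The remaining two conditions are the timing conditions, and these are the substantive part. For liveness, I must check $p_i\notin F(T[\ell])$ whenever $\Phi[\ell]\equiv(p_i,\dots)$: since $R'$ is valid, $p_i\notin\widetilde{F}(T'[\ell])=\widetilde{F}(T[\ell]+1)=F(T[\ell])$, so this holds. For the failure-detector-output condition, I must check $d=H(p_i,T[\ell])$ for $\Phi[\ell]\equiv(p_i,s,m,d,s',m')$: validity of $R'$ gives $d=\widetilde{H}(p_i,T'[\ell])=\widetilde{H}(p_i,T[\ell]+1)=H(p_i,T[\ell])$, so this holds too. Finally I should confirm $T$ is a genuine time-sequence, i.e.\ strictly increasing with values in $\T=\Naturals$: it is increasing because $T'$ is and decrementing preserves strict inequalities, and it takes values in $\Naturals$ because $T'[\ell]>0$ for all $\ell$ (noted just before the proposition, since $T'$ was obtained by deleting the first step-time of each process). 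I do not anticipate a genuine obstacle; the one place requiring a little care is making sure the ``exactly one receive if the sender is correct'' clause is handled by the equality $\corr(F)=\corr(\widetilde{F})$ rather than being silently assumed, and making sure the well-definedness of the shift $T[\ell]=T'[\ell]-1$ is justified by the $T'[\ell]>0$ remark.
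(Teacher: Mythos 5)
Your proposal is correct and follows the same route as the paper: observe that $R$ is just $R'$ with every step, failure, and failure-detector output shifted one time unit earlier, invoke Proposition~\ref{lem:HinDk} for $H\in\PFD_k(F)$, and conclude that all validity conditions transfer. You simply spell out the individual conditions (and the well-definedness of $T[\ell]=T'[\ell]-1$) more explicitly than the paper does.
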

\begin{proof}
From the construction of $T$, we know that in run $R$, each process
     $p_i$ takes the same steps as in $R'$, but each step taken
     at time $t$ in $R'$ is taken at time $t-1$ in $R$.
From the construction of $H$, we see that the output of the failure
     detector queried in run $R'$ at a time~$t$ is identical to the
     output of the failure detector queried in run $R$ at time $t-1$.
Similarly, in the failure pattern $F$, each process that crashes at time
     $t$ in $\widetilde{F}$ crashes at time~$t-1$ in~$F$.
Therefore, the run $R$ is the run $R'$ after every step and the
     associated failure detector output in $R'$ is moved earlier in
     time by $1$ unit.

Also, recall that  $H \in \PFD_k(F)$, from Proposition~\ref{lem:HinDk}.
Therefore, if $R'$ is a valid run of $\Alg$ using failure detector
     $\PFD_{k+1}$, then $R$ is a valid run of $\Alg$ using $\PFD_{k}$.
\end{proof}

As $\Alg$ solves $\problem$ using $\PFD_k$, for each process $p_i$
     there exists a function $V_i$ that maps each state of $p_i$ to a
     problem state.
For each process $p_i$ we define a new function $\widetilde{V}_i$ as
     follows.
For each (initial) state $s \in S^\star_i$, $\widetilde{V}_i(s) =
     V_i(delay_{i}(s))$, and for each state  $s \notin  S^\star_i$,
     $\widetilde{V}_i(s) = V_i(s)$.

\begin{theorem}
If $\Alg$ solves problem $\problem$ using failure detector $\PFD_{k}$,
     then Algorithm $\TrAlg$ solves problem $\problem$ using failure
     detector $\PFD_{k+1}$.
\end{theorem}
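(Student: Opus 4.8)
The plan is to mirror the argument used to prove the corresponding statement for $\mathfrak{F}_{SoS}$ at the end of Section~\ref{sec:MP}. Fix an arbitrary valid run $\widetilde{R}=\langle \widetilde{F},\widetilde{H},\widetilde{I},\widetilde{\Phi},\widetilde{T}\rangle$ of $\TrAlg=\mathfrak{F}_{DaS}(\Alg)$ using $\PFD_{k+1}$, and let $R=\langle F,H,I,\Phi,T\rangle$ be the run of $\Alg$ constructed from $\widetilde{R}$ earlier in this section. By Propositions~\ref{lem:RprimeIsARunOfAlg}, \ref{lem:HinDk}, and~\ref{lem:RIsAValidRun}, $R$ is a valid run of $\Alg$ using $\PFD_{k}$. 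Since, by assumption, $\Alg$ solves $\problem$ using $\PFD_k$ by means of the interpretation $V_\Pi$, the predicate $\problem(\ident{ir}(R,V_\Pi),F)$ holds. What remains is to transfer this fact to $\widetilde{R}$ under the interpretation $\widetilde{V}_\Pi=(\widetilde{V}_i)_{p_i\in\Pi}$ defined just above. First I would observe that $\widetilde{V}_\Pi$ is a legitimate interpretation for $\TrAlg$: the initial states of $\TrAlg_i$ are exactly $S^\star_i$, and since $delay_i\colon S^\star_i\to\hat{Q}_i$ is a bijection while $V_i$ maps $\hat{Q}_i$ onto $\hat{\sigma}$, the composition $\widetilde{V}_i=V_i\circ delay_i$ sends $S^\star_i$ onto $\hat{\sigma}$.

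The core step is to establish $\ident{ir}(R,V_\Pi)\sqsubseteq\ident{ir}(\widetilde{R},\widetilde{V}_\Pi)$. Since $\Phi$ is obtained from $\widetilde{\Phi}$ by deleting, for each process $p_i$ that takes at least one step, its first step $\widetilde{\Phi}[\ident{no-op}(i)]$---and nothing else is reordered or removed---the configuration sequence of $\widetilde{R}$ is the configuration sequence of $R$ with at most $|\Pi|$ additional configurations inserted, one immediately after each deleted no-op step. I would then check, process by process, that each such additional configuration is interpreted by $\widetilde{V}_\Pi$ exactly as its predecessor: when $p_i$ executes its no-op step it moves from $\widetilde{I}|_i\in S^\star_i$ to $delay_i(\widetilde{I}|_i)=I|_i$ and touches neither the links nor any other component of the configuration, while $\widetilde{V}_i(\widetilde{I}|_i)=V_i(delay_i(\widetilde{I}|_i))=V_i(I|_i)=\widetilde{V}_i(I|_i)$ because $I|_i\notin S^\star_i$; hence $p_i$'s problem state, and therefore the whole problem configuration, is unchanged. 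For every other state $s$ that any process ever reaches in $\widetilde{R}$ we have $s\notin S^\star_i$, so $\widetilde{V}_i(s)=V_i(s)$, and the non-stuttering portions of the two interpreted runs coincide (in particular $\widetilde{V}_\Pi(\widetilde{I})=V_\Pi(I)$, an initial problem configuration). Thus $\ident{ir}(\widetilde{R},\widetilde{V}_\Pi)$ is obtained from $\ident{ir}(R,V_\Pi)$ by inserting finitely many repeated problem configurations, which is a stutter in the sense of Section~\ref{sec:problemDefinition} (each insertion being a $1$-stutter with $\conf_{mid}=\conf=\conf'$); the corner cases---a faulty process that takes no step, or a run with no correct process---are routine and at worst turn this relation into an equality.

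Finally I would close the argument as follows. From $F(t)=\widetilde{F}(t+1)$ and the monotonicity of $\widetilde{F}$ one obtains $correct(F)=correct(\widetilde{F})$, so by the crash time independence of $\problem$ the predicate $\problem(\ident{ir}(R,V_\Pi),\widetilde{F})$ holds as well. Combining this with $\ident{ir}(R,V_\Pi)\sqsubseteq\ident{ir}(\widetilde{R},\widetilde{V}_\Pi)$ and the finite stuttering property of $\problem$ yields $\problem(\ident{ir}(\widetilde{R},\widetilde{V}_\Pi),\widetilde{F})$. Since $\widetilde{R}$ was an arbitrary valid run of $\TrAlg$ using $\PFD_{k+1}$, this shows that $\TrAlg$ solves $\problem$ using $\PFD_{k+1}$ with interpretation $\widetilde{V}_\Pi$. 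I expect the main obstacle to be the middle paragraph: one has to argue carefully that deleting the no-op steps removes \emph{problem-level} stuttering and not merely algorithm-level stuttering, and that is exactly the point at which the definition $\widetilde{V}_i(s)=V_i(delay_i(s))$ on $S^\star_i$ is needed; everything else is bookkeeping on the constructions already set up in Propositions~\ref{lem:RprimeIsARunOfAlg} through~\ref{lem:RIsAValidRun}.
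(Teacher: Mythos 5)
Your proposal is correct and follows essentially the same route as the paper's proof: reduce $\widetilde{R}$ to a valid run $R$ of $\Alg$ using $\PFD_k$ via Propositions~\ref{lem:RprimeIsARunOfAlg}--\ref{lem:RIsAValidRun}, invoke crash time independence to replace $F$ by $\widetilde{F}$, and use the definition of $\widetilde{V}_i$ on $S^\star_i$ to show $\ident{ir}(R,V_\Pi)\sqsubseteq\ident{ir}(\widetilde{R},\widetilde{V}_\Pi)$ so that finite stuttering closes the argument. Your middle paragraph actually spells out the stutter step in more detail than the paper does, and the observation that $\widetilde{V}_\Pi$ is a legitimate (surjective onto $\hat{\sigma}$) interpretation is a worthwhile check the paper leaves implicit.
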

\begin{proof}
Let $\widetilde{R}=\langle \widetilde{F}, \widetilde{H}, \widetilde{I},
     \widetilde{\Phi}, \widetilde{T} \rangle$ be a
     valid, run of $\TrAlg$ using $\PFD_{k+1}$.
Applying Propositions \ref{lem:RprimeIsARunOfAlg}, \ref{lem:HinDk},
     and \ref{lem:RIsAValidRun}, we see that from $\widetilde{R}$ we
     can construct a unique run $R=\langle F, H, I, \Phi, T \rangle$
     that is a valid run of $\Alg$ using $\PFD_{k}$.

Note that by assumption $\Alg$ solves problem $\problem$ using failure
     detector $\PFD_k$.
Hence there is an interpretation $V_\Pi$ which ensures that
     $\problem(\ident{ir}(R,V_\Pi), F)$ holds.
Since $correct(F) = correct(\widetilde{F})$, applying the crash time independence
     property from Section \ref{sec:problemDefinition}, we obtain that
     $\problem(\ident{ir}(R,V_\Pi),\widetilde{F})$ is true.

Note that for each process $p_i$, $p_i$ is never is a state $s_i \in
     S^\star_i$ in run $R$, and for each state  $s \notin  S^\star_i$,
     $\widetilde{V}_i(s) = V_i(s)$.
Therefore, $\ident{ir}(R,V_\Pi) = \ident{ir}(R,\widetilde{V}_\Pi)$.

Also, note that for each process $p_i$, for each state $s \in
     S^\star_i$, $\widetilde{V}_i(s) = V_i(delay_{i}(s))$, and
     $delay_i(s) \in \hat{Q}_i$; therefore,
     $\widetilde{V}_i(delay_{i}(s)) = V_i(delay_{i}(s))$.
In effect, $\ident{ir}(R,\widetilde{V}_\Pi) \sqsubseteq
     \ident{ir}(\widetilde{R}, \widetilde{V}_\Pi)$.
So we apply the finite stutter property from  Section
     \ref{sec:problemDefinition} and see that since
     $\problem(\ident{ir}(R,V_\Pi), F)$ is true,
     $\problem(\ident{ir}(\widetilde{R}, \widetilde{V}_\Pi),
     \widetilde{F})$ is also true.

We thus have shown that for any interpreted run $w \in
     \ident{IR}(\TrAlg,\widetilde{F},\PFD_{k+1},\widetilde{V}_\Pi)$,
     the predicate $P(w,\widetilde{F})$ holds.
In other words,   $\TrAlg$ solves  $\problem$ using
     failure detector $\PFD_{k+1}$.
\end{proof}

\begin{corollary}
$\PFD_k \redSolv \PFD_{k+1}$ and $\PFD_{k+1} \redSolv \PFD_k$.
\end{corollary}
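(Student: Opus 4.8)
The plan is to obtain both conjuncts as essentially immediate consequences of results already in hand. For the first conjunct, $\PFD_k \redSolv \PFD_{k+1}$, I would invoke the result $\PFD_k \redCT \PFD_{k+1}$ from \cite{bhatt:oteow}, together with the observation recorded earlier (Section~\ref{sec:comparisonRelations}) that $\redSolv$ extends $\redCT$: if $\D \redCT \D'$ then every problem solvable using $\D'$ is solvable using $\D$ \cite{chan:ufdfr}. Instantiating this with $\D = \PFD_k$ and $\D' = \PFD_{k+1}$ gives $\PFD_k \redSolv \PFD_{k+1}$ directly. (One could alternatively establish this conjunct from scratch via the trivial ``copy the failure-detector output into $\ident{out}_i$'' transformation, using that the histories of $\PFD_k$ are among the histories of $\PFD_{k+1}$, but reusing the cited $\redCT$ result is cleaner and is already the route indicated in Section~\ref{sec:examples}.)

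For the second conjunct, $\PFD_{k+1} \redSolv \PFD_k$, I would unfold the definition of $\redSolv$: it suffices to show that every problem solvable using $\PFD_k$ is also solvable using $\PFD_{k+1}$. So fix an arbitrary problem $\problem$ solvable using $\PFD_k$; by definition there is an algorithm $\Alg$ that solves $\problem$ using $\PFD_k$. Form $\TrAlg = \mathfrak{F}_{DaS}(\Alg)$ via the Delay-a-Step transformation of Section~\ref{subsubsec:atdas}. The theorem established immediately before this corollary then says precisely that $\TrAlg$ solves $\problem$ using $\PFD_{k+1}$, so $\problem$ is solvable using $\PFD_{k+1}$. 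Since $\problem$ was arbitrary among the problems solvable using $\PFD_k$, this yields $\PFD_{k+1} \redSolv \PFD_k$. Combining the two conjuncts gives the stated equivalence $\PFD_k \equivSolv \PFD_{k+1}$.

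I do not expect a genuine obstacle here. The only point that warrants an explicit check is that the preceding theorem really applies to \emph{every} problem solvable by $\PFD_k$: this holds because $\mathfrak{F}_{DaS}$ is defined uniformly on all algorithms and the theorem's hypothesis is exactly ``$\Alg$ solves $\problem$ using $\PFD_k$'' with no further restriction on $\problem$. The one subtlety worth a sentence in the write-up is that the interpretation $\widetilde{V}_\Pi$ witnessing solvability by $\PFD_{k+1}$ is the one constructed inside that theorem's proof from $V_\Pi$ (via $\ident{delay}_i$), so solvability transfers on the nose rather than requiring any fresh argument.
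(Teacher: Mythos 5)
Your proposal is correct and matches the paper's (implicit) argument exactly: the first conjunct follows from $\PFD_k \redCT \PFD_{k+1}$ of \cite{bhatt:oteow} plus the fact that $\redSolv$ extends $\redCT$, and the second conjunct is an immediate instance of the Delay-a-Step theorem proved just before the corollary, applied to an arbitrary problem solvable using $\PFD_k$. No gaps.
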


\section{Conclusion}

In this paper, we introduced a new proof technique that compares
     failure detectors and does not depend on the ability of one
     failure detector to implement another.
Instead, we propose a novel approach which is based on algorithm
     transformation  so that for every algorithm $\Alg$ that solves
     some problem using failure detector $\D'$ we derive a new
     algorithm~$\TrAlg$ which solves the same problem using $\D$
     instead, and thus we show $\D \redSolv \D'$, where $\redSolv$ is
     the solvability relation introduced
     in~\cite{charron-bost:10:isolt}.

We demonstrated the utility of the new proof technique by presenting
     two new results.
First, we showed that the $\PFD$ and $\MFD$ failure detectors, which
     are incomparable with respect to the~$\redCT$ and $\redJT$ relations,
     are strictly ordered with respect to the $\redSolv$ relation;
     $\MFD$ is strictly stronger than $\PFD$.
Second, we showed that the $\PFD_k$ series of failure detectors
     (denoted by $\D_k$ in \cite{bhatt:oteow}), which were shown to be
     strictly ordered as $\PFD_k \redCT \PFD_{k+1}$ for all $k$, are
     equivalent to each other with respect to the $\redSolv$ relation.

\paragraph{Significance.}

The primary motivation for the introduction of the $\MFD$ failure
     detector in \cite{guer:01:hfap} was to show that $\PFD$ is not
     the weakest failure detector for certain problems such as
     non-blocking atomic commitment or terminating reliable broadcast.
This was done by showing that $\MFD$ and~$\PFD$, despite being
     incomparable with respect to $\redCT$, can be used to solve the
     aforementioned problems under consideration.
However, we have shown that $\MFD$ and $\PFD$ can be strictly ordered
     with respect to $\redSolv$.
This shows that the reasoning used in \cite{guer:01:hfap} is limited
     only to the  $\redCT$ relation.\footnote{It was later shown in
     \cite{larrea:otwfd} that failure detectors that are weaker with
     respect to $\redCT$ than both $\MFD$ and $\PFD$ are sufficient to
     solve non-blocking atomic commitment and terminating reliable
     broadcast.
However, our motivation was not to find a weakest failure detector for
     a given problem, but rather to make explicit that certain proofs
     are limited to the $\redCT$ relation.}

Similarly, the $\PFD_k$ sequence of failure detectors was introduced
     in \cite{bhatt:oteow} in order to demonstrate that FCFS mutual
     exclusion does not have a weakest failure detector.
The proof relies on the fact that for any $k$, $\PFD_k$ is strictly
     stronger than $\PFD_{k+1}$ with  respect to~$\redCT$
     while every such $\PFD_k$ is sufficient to solve FCFS mutual
     exclusion.
However, we have shown that all the $\PFD_k$ failure detectors are
     equivalent with respect to~$\redSolv$ and, therefore, these
     failure detectors solve the same set of time-free problems.

The above two examples show that some results on weakest failure
     detectors based on the $\redCT$ and $\redJT$ relation do not
     carry over to the $\redSolv$ relation.
This, in conjunction with the seemingly contradictory results
     regarding the (non)existence of weakest failure detectors in
     \cite{jayanti:ephawfd} and \cite{bhatt:oteow}, leaves open the
     possibility that the use of failure detectors as ``computability
     benchmark'' \cite{FGK11} may not be appropriate until we have
     resolved the question of the ``right'' comparison relation to
     order failure detectors.

\paragraph{Comparison to standard proofs.}

From a technical viewpoint, our new proof technique is  quite similar
     to proofs that establish the $\redCT$ relation.
In both, one argues about an algorithm using some failure detector.
In $\redCT$ proofs, one usually gives an algorithm more or less
     explicitly, while we give an algorithm~$\TrAlg$ as function of
     another algorithm~$A$.
In $\redCT$ proofs, one shows that the states the algorithm goes
     through are related to histories of the implemented failure
     detector.
In our $\redSolv$ proofs, we show that the states the algorithm goes
     through are related to problem configuration sequences.

The differences in the comparison relations discussed above then come
     from the fact that we relate to a schedule of algorithm $A$ which
     is within the world of asynchronous runs, while $\redCT$ proofs
     relate to a failure detector history, which is defined with
     respect to time, and is hence outside the world of asynchronous
     runs.

\paragraph{Future Work.} 

Our results are preliminary and provide multiple avenues for future
     work.
We present two such open questions.
First, note that the proof technique introduced here does not
     necessarily characterize the $\redSolv$ relation completely.
That is, there might be other proof techniques which establish the
     $\redSolv$ relation between two failure detectors in the cases
     where our proposed technique does not lead to the required
     result.
Thus, there is scope for complete characterization of the $\redSolv$
     relation.
Second, since different comparison relations establish different
     relationships among various failure detectors, an obvious
     question presents itself: is there a ``right'' comparison
     relation for failure detectors? If yes, which one is it?

\paragraph{Acknowledgement.} We would like to thank Jennifer Welch and Martin Hutle for their comments, suggestions, and criticisms that greatly helped improve this article.

\bibliography{master}
\end{document}